\documentclass[prd,aps,amsfonts,superscriptaddress,nofootinbib,longbibliography,notitlepage,eqsecnum]{revtex4-1}

\usepackage{graphicx}
\usepackage{xcolor,verbatim}
\usepackage{caption}
\usepackage{rotating}
\usepackage{amsmath,amssymb,graphics,amsthm,isomath}
\usepackage{subfigure}
\usepackage{physics }
\usepackage{circuitikz}
\usepackage[colorlinks=true, urlcolor=violet, linkcolor=blue, citecolor=red, hyperindex=true, linktocpage=true]{hyperref}
\usepackage[capitalise,compress]{cleveref}

\newtheorem{thm}{Theorem}
\newtheorem{cor}[thm]{Corollary}

\newtheorem{obs}[thm]{Observation}

\theoremstyle{definition}
\newtheorem{defn}[thm]{Definition}

\makeatletter
\renewcommand{\p@subsection}{}
\renewcommand{\p@subsubsection}{}
\makeatother

\usepackage{xcolor}
\usepackage{mathtools}

\makeatletter
\pgfcircdeclarebipolescaled{instruments}
{
    \anchor{text}{\pgfextracty{\pgf@circ@res@up}{\northeast}
        \pgfpoint{-.5\wd\pgfnodeparttextbox}{
            \dimexpr.5\dp\pgfnodeparttextbox+.5\ht\pgfnodeparttextbox+\pgf@circ@res@up\relax
        }
    }
}
{\ctikzvalof{bipoles/oscope/height}}
{josephson}
{\ctikzvalof{bipoles/oscope/height}}
{\ctikzvalof{bipoles/oscope/width}}
{
    \pgf@circ@setlinewidth{bipoles}{\pgfstartlinewidth}
    \pgfextracty{\pgf@circ@res@up}{\northeast}
    \pgfextractx{\pgf@circ@res@right}{\northeast}
    \pgfextractx{\pgf@circ@res@left}{\southwest}
    \pgfextracty{\pgf@circ@res@down}{\southwest}
    \pgfmathsetlength{\pgf@circ@res@step}{0.25*\pgf@circ@res@up}
    \pgfscope
        \pgfpathrectanglecorners{\pgfpoint{\pgf@circ@res@left}{\pgf@circ@res@down}}{\pgfpoint{\pgf@circ@res@right}{\pgf@circ@res@up}}
        \pgf@circ@draworfill
    \endpgfscope
    \pgfscope
      \pgfpathmoveto{\pgfpoint{\pgf@circ@res@left}{\pgf@circ@res@up}}%
      \pgfpathlineto{\pgfpoint{\pgf@circ@res@right}{\pgf@circ@res@down}}%
      \pgfpathmoveto{\pgfpoint{\pgf@circ@res@right}{\pgf@circ@res@up}}%
      \pgfpathlineto{\pgfpoint{\pgf@circ@res@left}{\pgf@circ@res@down}}%
      \pgfusepath{draw}
    \endpgfscope
}
\def\pgf@circ@josephson@path#1{\pgf@circ@bipole@path{josephson}{#1}}
\tikzset{josephson/.style = {\circuitikzbasekey, /tikz/to path=\pgf@circ@josephson@path, l=#1}}

\usepackage{dsfont}

\begin{document}

\title{Symplectic geometry and circuit quantization}

\author{Andrew Osborne}
\email{andrew.osborne-1@colorado.edu}
\affiliation{Department of Physics, University of Colorado Boulder, Boulder CO 80309, USA}
\affiliation{Center for Theory of Quantum Matter, University of Colorado, Boulder CO 80309, USA}

\author{Trevyn Larson}
\affiliation{Department of Physics, University of Colorado Boulder, Boulder CO 80309, USA}
\affiliation{National Institute of Standards and Technology, Boulder, Colorado 80305, USA}

\author{Sarah Jones}
\affiliation{Department of Physics, University of Colorado Boulder, Boulder CO 80309, USA}
\affiliation{Department of Electrical, Computer \& Energy Engineering, University of Colorado Boulder, Boulder, CO 80309, USA}

\author{Ray W. Simmonds}
\affiliation{National Institute of Standards and Technology, Boulder, Colorado 80305, USA}

\author{Andr\'as Gyenis}
\affiliation{Department of Electrical, Computer \& Energy Engineering, University of Colorado Boulder, Boulder, CO 80309, USA}

\author{Andrew Lucas}
\email{andrew.j.lucas@colorado.edu}
\affiliation{Department of Physics, University of Colorado Boulder, Boulder CO 80309, USA}
\affiliation{Center for Theory of Quantum Matter, University of Colorado, Boulder CO 80309, USA}

\begin{abstract}
   Circuit quantization is an extraordinarily successful theory that describes the behavior of quantum circuits with high precision. The most widely used approach of circuit quantization relies on introducing a classical Lagrangian whose degrees of freedom are either magnetic fluxes or electric charges in the circuit.  By combining nonlinear circuit elements (such as Josephson junctions or quantum phase slips), it is possible to build circuits where a standard Lagrangian description (and thus the standard quantization method) does not exist.  Inspired by the mathematics of symplectic geometry and graph theory, we address this challenge, and present a Hamiltonian formulation of non-dissipative electrodynamic circuits.  The resulting procedure for circuit quantization is  independent of whether circuit elements are linear or nonlinear, or if the circuit is driven by external biases.  We explain how to re-derive known results from our formalism, and provide an efficient algorithm for quantizing circuits, including those that cannot be quantized using existing methods. 
\end{abstract}

\date{\today}
\maketitle

\tableofcontents

\section{Introduction}

The quantum mechanical description of superconducting circuits has paved the way for the rapid evolution of superconductor-based quantum computers~\cite{doi:10.1126/science.1231930, krantz_quantum_2019, kjaergaard_superconducting_2020,doi:10.1063/5.0082975}, enabled the discovery of the transmon~\cite{koch_2017}, fluxonium~\cite{manucharyan_fluxonium_2009}, bosonic~\cite{grimsmo_quantum_2021} and more complex circuits~\cite{gyenis_2021}, facilitated the advancements of coupling between qubits~\cite{stehlik_2021}, opened new avenues towards quantum simulation~\cite{PRXQuantum.2.017003}, and led to the theory of circuit quantum electrodynamics~\cite{blais_circuit_2021}. In this established formalism of circuit quantization~\cite{yurke_quantum_1984, devoret_leshouches, vool_introduction_2017, nigg_black-box_2012, ulrich_dual_2016, riwar_circuit_2022, chitta_computer-aided_2022,rajabzadeh2022analysis, koliofoti_compact_2022, rymarz_consistent_2022}, generalized branch or node fluxes, or charges, describe the energy of the elements.  While these fluxes and charges are conjugates, in all but the simplest circuits, there are inevitable \emph{constraints} that arise between variables, complicating a straightforward quantization prescription.

\subsection{The standard approach to circuit quantization}

The standard resolution in the literature is to begin by studying the classical Lagrangian mechanics of the circuit. 
In the Lagrangian formalism, we efficiently remove non-dynamical degrees of freedom by integrating them out; after this step, we perform a Legendre transformation to a classical Hamiltonian for the genuinely dynamical degrees of freedom.   Since the Legendre transformation reveals the canonical momenta for each coordinate, we can quantize a Hamiltonian self-consistently.

 As a simple example, consider an inductor, $L$, with two capacitors, $C_1$ and $C_2$, all in parallel [see Fig.~\ref{fig:intro_example}(a)].  There is one degree of freedom, which can be identified as the flux $\phi$ across the inductor.  Since $\dot \phi$ is the voltage drop across the capacitor, one writes down a Lagrangian \begin{equation}
    L = \frac{1}{2}(C_1+C_2)\dot \phi^2 -\frac{1}{2L}\phi^2.
\end{equation}
 This Lagrangian is interpreted as a ``kinetic energy minus potential energy" term, and is mathematically equivalent to a simple pendulum.  Note that we are also able to elegantly handle the two capacitors in parallel -- the Lagrangian automatically adds them into a single effective capacitor for us.  With a Lagrangian at hand, we find the conjugate momentum \begin{equation}
    q = \frac{\partial L}{\partial \dot \phi} = (C_1+C_2)\dot \phi,
\end{equation}
and finally the Hamiltonian reads\begin{equation}
    H = \frac{1}{2(C_1+C_2)}q^2 +\frac{1}{2L}\phi^2,
\end{equation}
where $\phi$ and $q$ are conjugate variables, and their Poisson bracket is $\{\phi,q\}=1$. We can quantize the circuit based on these conjugate pairs by imposing canonical commutation relations
\begin{equation}
    \left[\hat\phi,\hat{q}\right] = \mathrm i\hbar, \label{eq:introbracket}
\end{equation} 
where $\hbar$ is the reduced Planck constant, $\hat\phi$ is the flux operator, and $\hat{q}$ is the charge operator.

In the example above, we have linear capacitors and inductors.  Circuits might also contain a combination of  nonlinear and noninvertible capacitive and inductive elements, for example, Josephson junctions (JJs)~\cite{likharev_JosephsonJunctions, Likharev1985} and quantum phase slips (QPSs)~\cite{mooij_superconducting_2006, Astafiev2012, Kerman2013, Shaikhaidarov2022}.   Treating both of these nonlinear and noninvertible elements in the same circuit is still an open problem: because the energy of the JJs depends on the flux $\phi$ as $E\sim \cos(2\pi\phi/\phi_0)$, while the energy of the QPSs on the charge $q$ across the element as $E\sim \cos(2\pi q/2e)$, one can prove that no Lagrangian of the circuit with a single type of variable (flux or charge) exists in general.\footnote{A Lagrangian can be defined locally, but not globally, on phase space.  The obstruction to a global Lagrangian is that the Legendre transform is not defined because $\partial^2 L/\partial \dot \phi^2$ (or $\partial^2 L/\partial \dot q^2$) is not always positive definite.} Here, $\phi_0=h/2e$ is the fundamental flux quantum, $h$ is the Planck constant, and $e$ is the electron charge. Although for the minimal circuit of one JJ and one QPS included in a loop [see Fig.~\ref{fig:intro_example}(b)], one can immediately write down a Hamiltonian~\cite{le_doubly_2019} \begin{equation}
    H = -E_Q \cos \left(2\pi\frac{ q}{2e}\right) - E_J \cos \left(2\pi\frac{\phi}{\phi_0}\right), \label{eq:dualmonintro}
\end{equation}  circuits involving even a few of these elements can involve non-trivial constraints [see Fig.~\ref{fig:intro_example}(c)]. For example, the number of degrees of freedom is not equal to the number of JJs or QPSs; worse, the fluxes and charges across the different elements may not be conjugate pairs.   Understanding how to even identify the dynamical degrees of freedom, let alone quantize the circuit, is an open problem.

\subsection{Our quantization method}

In this paper, we present an alternative approach to circuit quantization.  Our approach is inspired by earlier work \cite{burkard_multilevel_2004} that links graph theory to circuit quantization.   However, rather than using Lagrangian mechanics as the starting point, we instead appeal to the mathematics of symplectic geometry \cite{arnold,dasilva}, which generalizes textbook Hamiltonian mechanics to more abstract/general settings. We will show that this more abstract perspective elegantly resolves the puzzle of how to choose canonical coordinates, without relying on any assumptions about the constitutive relations of the inductive or capacitive elements in the circuit.  Hence, our approach is universal for quantum circuits made out of nonlinear inductive and capacitive elements and capable of resolving the puzzle above.  

The key insight of our theory is that the most natural quantization prescription involves building conjugate degrees of freedom out of charge variables on branches ($q_e$) and flux variables on nodes ($\phi_v$) of the capacitive subgraph of the circuit (see Fig.~\ref{fig:intro}).  To understand our construction, let us remind the reader how one usually models circuits.  The degrees of freedom are voltages $V$ and currents $I$, which can be integrated in time to give flux $\phi$ and charge $q$.  In circuits, $\phi$ and $q$ are canonically conjugate variables, similar to position and momentum: recall Eq.~(\ref{eq:introbracket}). Of course, a typical circuit involves multiple elements and therefore multiple flux and charge variables.  Due to Kirchhoff's voltage law (the sum of the voltage drops around a loop vanishes in the absence of external magnetic fields), it is natural to define voltages on nodes (or vertices) $v$ of the circuit. On the other hand, currents $I$ are naturally defined on branches (or edges) $e$ in the circuit.  In general, there is not a one-to-one mapping between the branches $v$ and nodes $e$ of a circuit. How, therefore, can we possibly find the conjugate pairs? 

\begin{figure}
    \centering
    \includegraphics[width = 8.6cm]{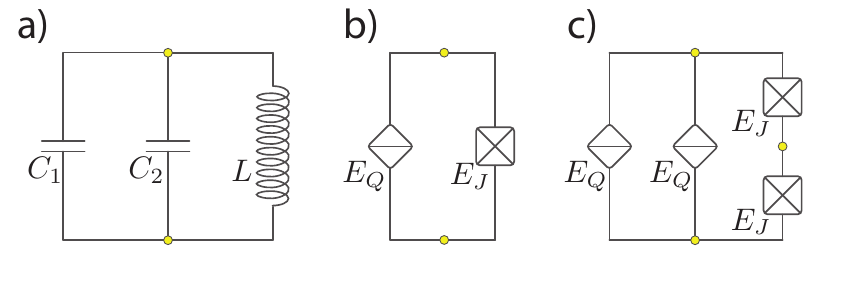}
    \caption{\textbf{Examples of quantum circuits with Lagrangians using one or two types of generalized coordinates.} (a) An inductor shunted by two capacitors: the Lagrangian of the circuit contains a single type of variable, the flux across the inductor, such that $L(\phi,\dot{\phi})$. The charge-flux conjugate pairs are defined at the Lagrangian level. (b) A Josephson junction and a quantum phase slip junction forming a loop. This quantum circuit can not be described by a Lagrangian using a single type of variable but only with a Lagrangian that contains both charge and flux variables, $L(\phi,\dot{\phi}, q,\dot{q})$. (c) A more complex circuit of multiple Josephson junctions and quantum phase slips, where writing down a Hamiltonian requires geometrical arguments.} 
    \label{fig:intro_example}
\end{figure}

The mathematical puzzle above is not entirely semantic.   In classical and quantum Hamiltonian mechanics, there must be an equal number of position and momentum coordinates.  As we described above, the prior resolution in the literature has been to use Lagrangian mechanics to avoid tackling this  issue head-on.   The Lagrangian is written using only one type of variable (for example, branch fluxes), and then the conjugate momenta is found at the Lagrangian level.  Any excess in degrees of freedom is dealt with by integrating out non-dynamical variables. However, writing down a Lagrangian as a first step is not always possible.  A simple example is the circuit that we discussed above, the dualmon qubit~\cite{le_doubly_2019} shown  in Fig.~\ref{fig:intro_example}(b). The Hamiltonian in Eq.~(\ref{eq:dualmonintro}) is only known because there is just one dynamical degree of freedom.

What this paper provides is a way of solving the constraints on $\phi$ and $q$ variables, directly in a Hamiltonian formulation, such that we can find suitable linear combinations of charge and flux variables that are canonically conjugate (and equal in number).  As stated above, the approach is inspired by symplectic geometry, together with the simple observation that the equations of motion for a  circuit follow from the action \begin{equation}
    S = \int \mathrm{d}t\left[ -E_{\mathrm{tot}}(q_e,\phi_v) + \sum_{e,v} q_e \Omega_{ev}\dot\phi_v\right]. \label{eq:mainintro}
\end{equation}
Note that this action involves both branch charges $q_e$ and node fluxes $\phi_v$.   
Remarkably, this action encodes within it the \emph{Hamiltonian mechanics} of the circuit.  The function $E_{\mathrm{tot}}$ (upon fixing all constrained variables, which we provide a generic prescription to do) is the Hamiltonian itself, and equal to the sum of inductive and capacitive energies in the circuit. Critically, the energy of elements are expressed in their native coordinates: inductive elements' energies are expressed in terms of fluxes $\phi_v$, while capactive elements' energies are expressed in terms of charges $q_e$. Furthermore,   $\Omega_{ev}$ is the \emph{incidence matrix} for the \emph{capacitive subgraph} of the circuit (see Fig.~\ref{fig:intro}).   $q_e \Omega_{ev}\dot\phi_v$  is naturally interpreted using  symplectic geometry, and implies both a classical Poisson bracket, and a quantization prescription.

The rest of the paper will derive Eq.~(\ref{eq:mainintro}) in Section \ref{sec:formal}, explain how to subsequently quantize circuits in Section \ref{sec:qm}, and then show how to identify the physical degrees of freedom in numerous example circuits in Section \ref{sec:examples}.  We have written this paper in a pedagogical and self-contained way; no prior knowledge is required in either circuit quantization or mathematical physics (beyond textbook Lagrangian and Hamiltonian mechanics).  

\begin{figure}
    \centering
    \includegraphics[width = 17.2cm]{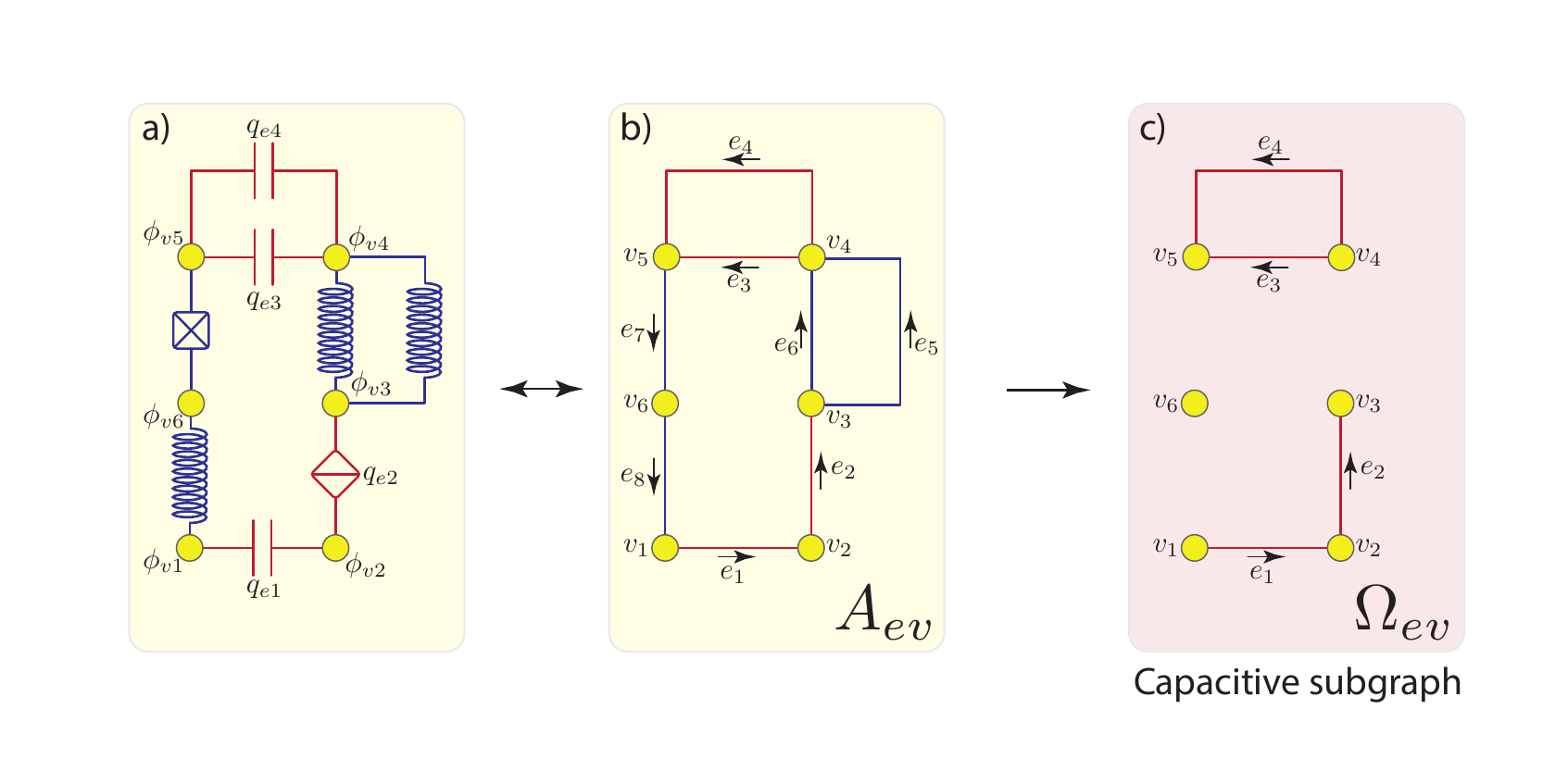}
    \caption{\textbf{Graph theory and quantum circuits.} (a) An arbitrary circuit containing all four types of superconducting circuit elements: capacitors, inductors, Josephson junctions and quantum phase slips. The node fluxes are $\phi_{v_i}$, while the branch charges across the capacitive elements are $q_{e_i}$. (b) The graph of the full circuit. The vertices of the graph are label as $v_i$, while the edges are denoted as $e_i$. Inductive branches are colored with blue lines, while capacitive ones are highlighted with red lines. The incidence matrix of the graph of the full circuit is $A_{ev}$. (c) The capacitive subgraph of the circuit containing only capacitive edges. The capacitive incidence matrix is $\Omega_{ev}$.} 
    \label{fig:intro}
\end{figure}

\section{Classical formalism}\label{sec:formal}
We now begin a gentle introduction to the classical mechanics of quantum circuits.  Our focus will be to motivate the derivation of Eq.~(\ref{eq:mainintro}); however, we provide background knowledge both into the experimental systems and also the mathematics of Hamiltonian mechanics, as is necessary to appreciate Eq.~(\ref{eq:mainintro}).

\subsection{Circuit elements and degrees of freedom}
First, we review the definition of branch charges and node fluxes. When describing superconducting circuits, we assume that the circuit elements are connected by perfect superconducting wires without inductive or resistive contributions. We call a part of the circuit that contains a circuit element a branch, and an intersection of the superconducting wires a node. We will soon mathematically describe the circuits as graphs, where branches will be associated to edges in the graph, and nodes as  vertices of the graph. We will use the former terminology throughout the paper to conform with the tradition used in quantum circuits.  As an homage to the mathematics, however, we will use the letter $e$ to denote a generic branch (edge), and $v$ to denote a generic node (vertex).

Now, consider a two-terminal superconducting circuit element defined between two nodes of a circuit. The node voltage $V_v(t)$ is the voltage at a given node, and the branch current $I_e(t)$ is the current flowing through the circuit element. We assume that it is sufficient to use only the voltage at nodes and the current across the element to describe the system and ignore the current and voltage distribution inside the elements, i.~e., we use a lumped element approximation for the circuit.\footnote{We will always separate out the inductive and capacitive elements, rather than working with circuit elements whose impedances are complicated functions of frequency. In principle, one sometimes does not wish to do this, but lumping inductors and capacitors together will complicate the quantization prescription in this paper.} Next, we define the generalized node flux $\phi_v$ and the branch charges $q_e$ as the time integral of the voltage and the current
\begin{subequations}
\begin{align}
\phi_v(t) &= \int\limits_{-\infty}^t \mathrm{d}\tau V_v(\tau), \\
q_e(t) &= \int\limits_{-\infty}^t \mathrm{d}\tau I_e(\tau).
\end{align}
\end{subequations}
It is also common to define the generalized branch fluxes, as the difference between the fluxes of the corresponding nodes; if branch $e$ connects nodes $v_i$ and $v_j$, the branch flux is 
\begin{equation}\label{eqn:branchflux}
\phi_e(t) = \phi_{v_i}(t) - \phi_{v_j}(t).
\end{equation}

The most standard variables in which one performs circuit quantization are $\phi_v$ or $\phi_e$, starting in the Lagrangian formalism.  However, such coordinates cannot describe circuits with quantum phase slip elements due to their non-invertible charge-voltage relationship. 
 To address that challenge, loop charges have been also used as an alternative approach to describe charge degrees of freedom in the Lagrangian description, in the special case where the graph of the circuit is planar~\cite{ulrich_dual_2016}.

The various circuit elements establish different relations between the branch variables and their derivatives [see Fig.~\ref{fig:elements}]. For example, linear capacitors and inductors connect linearly two variables
\begin{subequations}
\begin{align}
\dot q_e &= \frac{1}{L} \phi_e, \\
\dot \phi_e &= \frac{1}{C}  q_e, 
\end{align}
\end{subequations}
where $C$ is the capacitance, $L$ is the inductance, and for simplicity, we dropped the explicit time-dependence of the variables in the notations.

\begin{figure}
    \centering
    \includegraphics[width = 8.6cm]{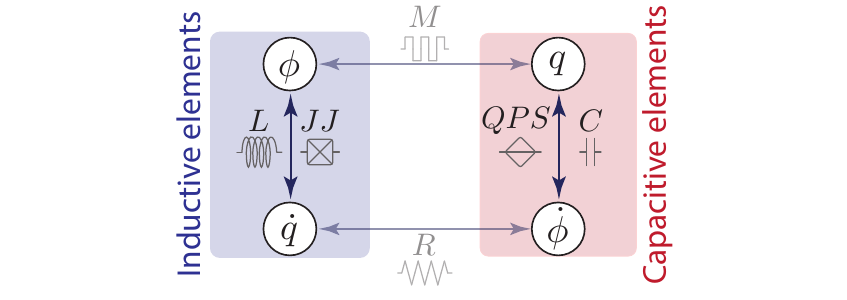}
    \caption{\textbf{Circuit elements and their constitutive relations.} The circuit elements that we consider in this work connect two variables (flux, charge, and their derivatives). Inductive elements relate flux $\phi$ and the current $\dot{q}$, such as linear inductors ($L$) and Josephson junctions ($JJ$). On the other hand, capacitive elements connect charge $q$ with voltage $\dot\phi$, such as linear capacitors ($C$) and quantum phase slip elements ($QPS$). The memristor ($M$) and resistor ($R$) connect directly charge with flux or their derivatives.} 
    \label{fig:elements}
\end{figure}

The two most well-known nonlinear and nondissipative circuit elements are the (multi-channel, low-transmission) Josephson junctions and the quantum phase slip junctions, where the connection between the variables is sinusoidal instead of linear
\begin{subequations}
\begin{align}
\dot q_e &= I_C\sin\left(2\pi\frac{\phi_e}{\phi_0}\right), \\
\dot \phi_e &= V_Q\sin\left(2\pi\frac{ q_e}{2e}\right). 
\end{align}\end{subequations}
Here, $I_C$ is the critical current of the Josephson junction, and $V_Q$ is the voltage amplitude of the quantum phase slip junction. There are other more complicated relations, e.g. for high-transmission Josephson junctions, containing higher harmonics in the current-phase relationship~\cite{PhysRevLett.67.3836}.

In this work, we focus only on non-dissipative circuits; thus, the two types of elements that we are concerned with are the capacitive and inductive elements. The energy stored in the elements can be expressed as \begin{equation}
    E=\int\limits_{-\infty}^t\mathrm{d}\tau I_e(\tau) V_e(\tau) =  \int\limits_{-\infty}^t\mathrm{d}\tau\dot q_e(\tau)\dot\phi_e(\tau).
\end{equation} 
We then see that the energy of the capacitive elements $E_\mathcal{C}$ depends only on the branch charge $q_e$ such that $E_\mathcal{C}= E_\mathcal{C}(q_e)$, while the energy of the inductive elements $E_\mathcal{I}$ is a function of only the branch flux $\phi_e$ and $E_\mathcal{I}=E_\mathcal{I}(\phi_e)$. For example, 
\begin{subequations}
\begin{align}
\mathrm{linear\ capacitor:\ } E_\mathcal{C}(q_e) &= \frac{q_e^2}{2C}, \\
\mathrm{quantum\ phase\ slip:\ } E_\mathcal{C}(q_e) &= -E_Q\cos\left(2\pi\frac{ q_e}{2e}\right), \\
\mathrm{linear\ inductor:\ } E_\mathcal{I}(\phi_e) &= \frac{\phi_e^2}{2L}, \\
\mathrm{Josephson\ junction:\ } E_\mathcal{I}(\phi_e) &= -E_J\cos\left(2\pi\frac{\phi_e}{\phi_0}\right).
\end{align}
\end{subequations}
where $E_J=\phi_0I_C/(2\pi)$ and $E_Q= 2eV_Q/(2\pi)$ are the Josephson and quantum phase slip energies.
The derivative of the energy with respect to the coordinates gives the voltage for capacitive elements, and the current for inductive elements
\begin{subequations} \label{eq:constitutive}
\begin{align}
V_e(t) &= \frac{\partial E_\mathcal{C}(q_e)}{\partial q_e}, \label{eq:capvoltage} \\
I_e(t) &= \frac{\partial E_\mathcal{I}(\phi_e)}{\partial \phi_e}. \label{eq:indcurrent}
\end{align}
\end{subequations}

\subsection{Circuits and graph theory}\label{sec:graph}
Electrical network graph theory~\cite{peikari1974} has played an important role in the existing theory of circuit quantization \cite{burkard_multilevel_2004,devoret_leshouches,vool_introduction_2017, PhysRevA.102.032208}. Following this approach, a quantum circuit can be considered as a directed graph, where the two-node circuit elements correspond to the edges of the graph, and the vertices of the graph are the points of the circuit where the elements connect. We consider an arbitrary circuit (see Fig.~\ref{fig:intro}) that has $k$ nodes and $K$ branches, and we denote the set of nodes as $\mathcal{V}$ and the set of branches as $\mathcal{E}$.  We assume that we do not lump together inductive and capacitive elements, so that we can classify each branch as one or the other.   Suppose there are $N$ capacitive branches and $K-N$ inductive branches.   The set of capacitive branches is $\mathcal{C}\subset \mathcal{E}$, and the set of inductive branches is $\mathcal{I} \subset \mathcal{E}$; thus, $|\mathcal{E}|=K$, $|\mathcal{C}|=N$, and $|\mathcal{I}|=K-N$. 

A key object of the graph is the incidence matrix $A_{ev}$ that provides information on the interconnection of the circuit elements. In particular, the rows/columns of the matrix correspond to the edges/vertices, and the value of a matrix element is $+1$ ($-1$) if an edge points toward (from) a vertex, otherwise it is 0:
\begin{equation}
A_{ev}=
\begin{cases}
  1, & \text{if } e \rightarrow v \\            
   -1, & \text{if } e \leftarrow v \\           
  0, & \text{otherwise}.
\end{cases}
\end{equation}
By using $A_{ev}$, we can write down equations of motion in a compact way, which will eventually lead us to our quantization prescription.

To see this, we define the branch charges $q_e$ only on the capacitive edges of the graph, whereas we assign node fluxes $\phi_v$ to all nodes. We consider circuits without time-dependent external fluxes and gate voltages for now. 

Let us first explain why, in fact, the flux variables should naturally live on vertices.  Kirchhoff's voltage rule states that in the absence of external magnetic fields around any loop of branches $e_1 \rightarrow e_2 \rightarrow \dots \rightarrow e_n \rightarrow e_1$ of (any) length $n$: \begin{equation}
\sum_{j=1}^n \phi_{e_j}=0.\label{eq:loopconstraint}
\end{equation}
Observe that if $\phi_e(t) = \phi_v(t)-\phi_u(t)$ whenever $e=u\rightarrow v$ [see Eq.~(\ref{eqn:branchflux})], then this constraint would automatically be obeyed.  Mathematically, one can actually prove that \emph{all} solutions to the constraints in Eq.~(\ref{eq:loopconstraint}) are of this form.\footnote{This uses homology theory, or the abstract form of Stokes' Theorem \cite{nakahara}.} 
It thus makes sense to think of the dynamical variables as the $\phi_v$, which are much less constrained, rather than $\phi_e$s, which obey all of the constraints in Eq.~(\ref{eq:loopconstraint}). 

Kirchhoff's current law states that for any node $v$, the exiting and entering currents are equal
\begin{equation}
    \sum_{e:e \text{ entering }v} I_e(t) = \sum_{e:e \text{ exiting }v} I_e(t). \label{eq:kirchoffIfirst}
\end{equation}
Using the incidence matrix $A_{ev}$ defined above, we can write Eq.~(\ref{eq:kirchoffIfirst}) as \begin{equation}
    \sum_e A_{ev}I_e(t) = 0. \label{eq:kirchoffI}
\end{equation}
  
To obtain a more explicit version of Kirchhoff's laws, we need to consider the energy of the various elements. The total inductive energy of the system is
\begin{equation}
 E_{\mathcal{I},\mathrm{tot}} = \sum_{e=u\rightarrow v \in \mathcal{I}} E_{\mathcal{I},e}(\phi_e) = \sum_{e=u\rightarrow v \in \mathcal{I}} E_{\mathcal{I},e}(\phi_v-\phi_u),
\end{equation}
where we have defined $E_{\mathcal{I},e}$ to be the energy function for the inductor on edge $e$. 
Similarly, there is energy stored in the capacitive branches: 
\begin{equation}
 E_{\mathcal{C},\mathrm{tot}} = \sum_{e\in \mathcal{C}} E_{\mathcal{C},e}(q_e),
\end{equation}
and the total energy of the system is
\begin{equation}
 E_\mathrm{tot} = E_{\mathcal{C},\mathrm{tot}} + E_{\mathcal{I},\mathrm{tot}}.
\end{equation} 
Here, we emphasize that the energies can be general functions of the charge branch variables or node flux variables. Note also that $E_{\mathrm{tot}}$ is not necessarily a Hamiltonian function, since the variables $\phi_v$ and $q_e$ are not conjugate in any obvious way.  We will explain how to obtain a Hamiltonian from $E_{\mathrm{tot}}$ by the end of the section.

Now, we describe the final equations of motion, which provide the glue between the charge and flux variables, and link time derivatives of $\phi_v$ and $q_e$ to the energies above. First, if $e=u\rightarrow v\in \mathcal{C}$ is a pair of nodes that are connected through a capacitive branch $e$, the voltage between the nodes equals to the voltage drop across the connecting capacitors.  Using (\ref{eq:capvoltage}), we find 
\begin{equation}\label{eq:eom_cap}
	\dot\phi_v - \dot \phi_u  - \frac{\partial E_{\mathcal{C},e}(q_e)}{\partial q_e} = 0 \mathrm{\ \ for\ all\ } e = u \rightarrow v\in \mathcal{C}.
\end{equation}
Second, at each node, we apply Kirchhoff's current law. For a capacitive edge we have $I_e = \dot{q}_e$, while for an inductive edge Eq.~(\ref{eq:indcurrent}) implies that \begin{equation}
    I_e - \frac{\partial E_{\mathcal{I},e}(\phi_v-\phi_u)}{\partial \phi} = 0  \mathrm{\ \ for\ all\ } e = u \rightarrow v\in \mathcal{I}.
\end{equation} 
Hence we arrive at our second equations of motion upon plugging in to Eq.~(\ref{eq:kirchoffI})
\begin{equation}\label{eq:eom_ind}
	\sum_{e=u^\prime\rightarrow v^\prime \in \mathcal{I}} A_{ev} \frac{\partial E_{\mathcal{I},e}(\phi_{v^\prime}-\phi_{u^\prime})}{\partial \phi} + \sum_{e\in\mathcal{C}} A_{ev} \dot q_e  =0 \mathrm{\ \ for\ all\ } v\in \mathcal{V}.
\end{equation}

Observe that these equations of motion follow from the following Lagrangian \begin{equation}
    L = - E_{\mathrm{tot}}(q_e \text{ for }e \in \mathcal{C},\phi_v \text{ for } v\in\mathcal{V}) + \sum_{e \in \mathcal{C}, v \in \mathcal{V}} q_e \Omega_{ev} \dot\phi_v, \label{eq:main}
\end{equation}
where $\Omega_{ev}$ is the incidence matrix whose rows correspond to \emph{capacitive edges only}, and columns correspond to \emph{all vertices}.  While entry-wise $\Omega_{ev}=A_{ev}$,  we use the $\Omega_{ev}$ notation to emphasize that now we only care about \emph{capacitive edges}.  $E_{\mathrm{tot}}$ is simply the energy of all circuit elements.  Indeed, Eq.~(\ref{eq:eom_cap}) comes from the Euler-Lagrange equation of motion of $\phi_v$, while Eq.~(\ref{eq:eom_ind}) comes from the equation of motion of $q_e$. This may seem like a miracle!  But we hope that by the end of this paper, the reader will walk away thinking that Eq.~(\ref{eq:main}) is in fact \emph{the most natural Lagrangian for a circuit}.  Firstly, it elegantly allows us to encode $\phi_v$ as degrees of freedom on nodes, while $q_e$ are degrees of freedom on branches.  Secondly, and much more importantly, it also encodes within it the universal prescription for circuit quantization.

As we describe in the upcoming sections, it is possible to remove non-dynamical degrees of freedom relying on the geometrical properties of the capacitive incidence matrix $\Omega_{ev}$. After removing those variables, we arrive at a set of $q_i$ and $\phi_j$ variables that are the linear combination of the original charge branch and flux node variables, where, crucially, the number of charge and flux variables are equal. With these variables, the Lagrangian reads
\begin{equation}
    L = - H(q_i, \phi_j) + \sum_{i,j} q_i \tilde\Omega_{ij} \dot\phi_j, \label{eq:main_reduced}
\end{equation}
where the total energy term corresponds to the Hamiltonian function $H(q_i, \phi_j)=E_\mathrm{tot}(q_i, \phi_j)$ and the second term in the Lagrangian contains a square invertible matrix $\tilde\Omega_{ij}$ that is linked to the symplectic matrix of the circuit. In fact, we will show in Section \ref{sec:tree} how to choose variables wherein $\tilde\Omega_{ij}$ is the identity matrix.

\subsection{From Hamiltonian mechanics to symplectic geometry}\label{sec:geom}
To understand how Eq.~(\ref{eq:main_reduced}) leads us to circuit quantization, we need to take a step back and comment on a crucial analogy.  Consider for the moment the classical mechanics textbook problem of an object with $N$ degrees of freedom with coordinates $(x^1,\ldots, x^N)$ and canonically conjugate momenta $(p^1,\ldots, p^N)$ that is described by a Hamiltonian $H(x^i,p^i)$.  In this section we will use raised or lowered indices to emphasize the connections with mathematics: raised indices correspond to vector fields on phase space, while lowered indices correspond to differential forms (this is historically known as contravariant vs. covariant vectors).  Observe that if we define the Lagrangian as \begin{equation}
    L = -H(x^i,p^i) + \sum_{i=1}^N p^i \dot x^i,
\end{equation}
and the action as $S=\int \mathrm{d}t \; L$, the Euler-Lagrange equations reproduce Hamilton's equations: \begin{subequations}\begin{align}
 0 &= \frac{\delta S}{\delta p^i} = \dot x^i - \frac{\partial H}{\partial p^i}, \\
 0 &= \frac{\delta S}{\delta x^i} = -\dot p^i - \frac{\partial H}{\partial x^i}.
\end{align}\end{subequations}

Now, suppose that we have an invertible matrix $M_{ij}$, and we define a ``Lagrangian" such that
\begin{equation}
    L = -H(x^i,p^i) + \sum_{i,j=1}^N p^i M_{ij} \dot x^j. \label{eq:LagwithM}
\end{equation}
Again we find a kind of Hamiltonian mechanics, but with a slightly modified form of Hamilton's equations.  Denoting the elements of the matrix inverse $M^{-1}$ with raised indices so that \begin{equation}
    \sum_{j=1}^N M^{ij}M_{jk} = \sum_{j=1}^N M_{kj}M^{ji} = \delta^i_k,
\end{equation}
we find that
\begin{subequations}
    \begin{align}
        0 &=\dot x^i- \sum_{j=1}^N M^{ij}\frac{\partial H}{\partial p^j}, \\
        0 &= -\dot p^i-\sum_{j=1}^N \frac{\partial H}{\partial x^j}M^{ji}.
    \end{align}
\end{subequations}
In simple terms, the $M$ matrix tells us that the canonical conjugate variables are not $p^i$ and $x^i$, but rather $p^i$ and $\sum_j M_{ij}x^j$. 

If we define the Poisson bracket for two functions $f$ and $g$ as\begin{equation}\label{eq:pbdefn}
    \lbrace f,g \rbrace = \sum_{i,j=1}^N M^{ij}\left(\frac{\partial f}{\partial x^i}\frac{\partial g}{\partial p^j} - \frac{\partial g}{\partial x^i}\frac{\partial f}{\partial p^j}\right),
\end{equation}
then Hamilton's equations can be re-written for an arbitrary function $f$ as \begin{equation}
    \dot f = \lbrace f,H\rbrace.  \label{eq:poissonbracketEOM}
\end{equation}
As we highlight in Section \ref{sec:qm}, such theories are straightforward to quantize.   Importantly, with a few caveats, our classical Lagrangian for a circuit, which is  $L = - H(q_i, \phi_j) + \sum_{i,j} q_i \tilde\Omega_{ij} \dot\phi_j$ has precisely the form of Eq.~(\ref{eq:LagwithM}).

It is helpful to re-formulate the previous paragraph in a more abstract language.   Let us collect the position and momentum coordinates into a single variable $\xi^I = (x^1,\ldots, x^N, p^1, \ldots, p^N)$, and define the matrix \begin{equation}
    \omega_{IJ} = \left(\begin{array}{cc} 0 &\ M_{ij} \\ -M_{ji} &\ 0 \end{array}\right).
\end{equation}
Note that $\omega_{IJ}=-\omega_{JI}$ is antisymmetric, invertible, in our special case a constant, and it is called the \textit{symplectic form}.\footnote{More formally, a symplectic form is required to be a closed, non-degenerate 2-form.  To the geometer, our $\omega_{IJ}$ easily satisfies these criteria.}  Mathematicians define Hamiltonian mechanics in terms of a Hamiltonian function $H$, which generates time evolution via the Poisson brackets in Eq.~(\ref{eq:poissonbracketEOM}), and a symplectic form $\omega$.   Defining the inverse of $\omega_{IJ}$ as $\omega^{IJ}$ as before: \begin{equation}
   \sum_{J=1}^{2N} \omega^{IJ}\omega_{JK} = \sum_{J=1}^{2N}\omega_{KJ}\omega^{JI} = \delta^I_K,
\end{equation}
we can re-write the Poisson bracket as \begin{equation}
   \lbrace f,g\rbrace = \sum_{I,J=1}^{2n} \frac{\partial f}{\partial \xi^I} \omega^{IJ} \frac{\partial g}{\partial \xi^J}.
\end{equation}

The pair of a manifold with coordinates $\xi^I$ and symplectic form $\omega$ is called a \textit{symplectic manifold}.  Such a symplectic manifold is required for a notion of Hamiltonian mechanics to exist \cite{arnold,dasilva}.  Importantly however, \emph{any} symplectic manifold gives rise to the structures of Hamiltonian mechanics -- even ones where there are no global canonical conjugate pairs of coordinates.  The mathematical theory of geometric quantization \cite{woodhouse} shows that there is a way to quantize all such systems.  

The key point is that ``Lagrangians" of the form of Eq.~(\ref{eq:LagwithM}), 
\begin{equation}
    L = -H(x^i,p^i) + \sum_{i,j=1}^N p^i M_{ij} \dot x^j,
\end{equation}
which are similar to our circuit Lagrangian in Eq.~(\ref{eq:main}), 
\begin{equation}
    L = - H(q_i, \phi_j) + \sum_{i,j} q_i \tilde\Omega_{ij} \dot\phi_j, 
\end{equation}
are immediately understood in the language of Hamiltonian mechanics and symplectic geometry.  In particular, we simply \emph{read out the Hamiltonian function $H(q_i, \phi_j)$ and a Poisson bracket $ \{\phi_i, q_j\} = \tilde\Omega_{ji}^{-1}$}, which allows us to elegantly transition from classical to quantum mechanics.  

\subsection{Symplectic geometry of a circuit}\label{sec:symplectic_geometry}

After having reviewed the basics of symplectic geometry, we return to the question of how to construct the Hamiltonian $H(q_i,\phi_j)$ function of an arbitrary circuit from its total energy $E_\mathrm{tot}(q_e,\phi_v)$ and the connectivity of the elements $\Omega_{ev}$. In this section, we focus on the general approach, while in Sec.~\ref{sec:examples} we provide examples. A mathematically precise discussion, with proofs of all claims, is relegated to Appendix~\ref{app:math}.  

To begin, it is important to notice that the incidence matrix $\Omega_{ev}$ appearing in the Lagrangian of the circuit [see Eq.~(\ref{eq:main})] is not invertible. This is in contrast to the definition of the symplectic matrix, which is constructed from an invertible matrix $M_{ij}$ [see Eq.~(\ref{eq:LagwithM})]. Thus, at this point, it is not possible to carry out a Legendre transformation to arrive from the Lagrangian to a Hamiltonian with conjugate flux and charge pairs. The root of the problem is that we overcounted the degrees of freedom the way we constructed the Lagrangian. However, as we prove in Appendix~\ref{app:math} and summarize in Section \ref{sec:tree}, we can consistently and efficiently remove variables associated with constraints to obtain an invertible matrix (and a symplectic form) from the incidence matrix. Crucially, this procedure only depends on the geometrical structure of the capacitive subgraph of the circuit: the locations of inductively shunted islands and capacitive loops (see Fig.~\ref{fig:nullvectors}). 

There are three general methods to reduce the number of variables in our approach, which we highlight here (and give examples of in Sec.~\ref{sec:examples}).
\begin{figure}
    \centering
    \includegraphics[width = 8.6cm]{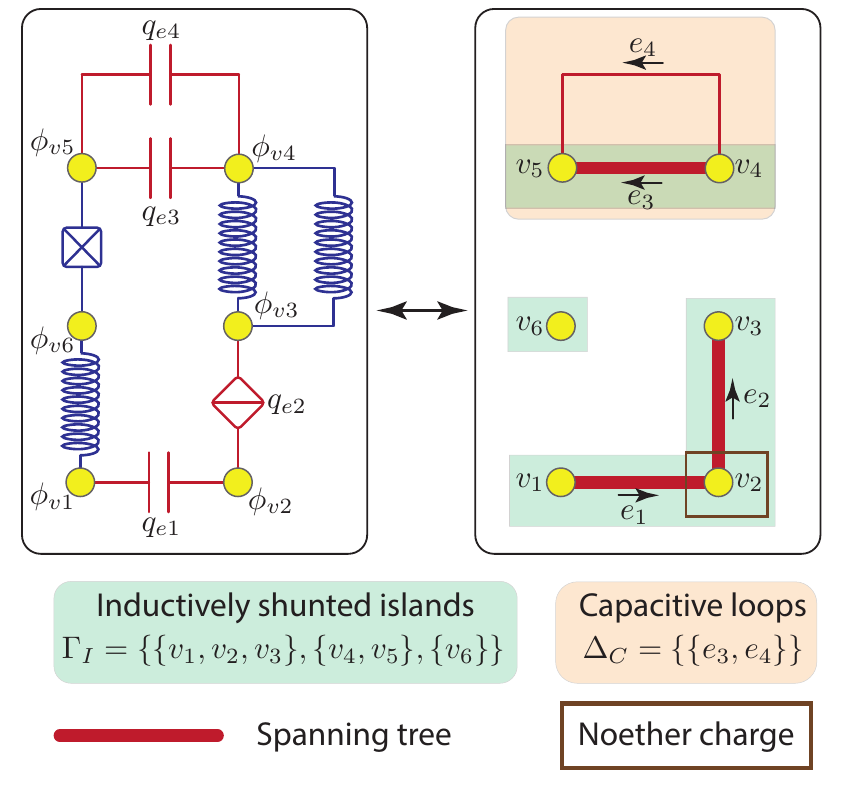}
    \caption{\textbf{Symplectic geometry of a circuit.} When describing a circuit with its capacitive graph, we can define two types of geometrical objects that correspond to null vectors of the capacitive incidence matrix of the circuit: inductively shunted islands (green-filled rectangular), and capacitive loops (orange-filled rectangular). The variables associated with these null vectors must be removed to be able to quantize the circuit. Additional variables can be removed based on the Noether charges of the circuit. The edges that are part of a (non-unique) spanning tree are highlighted with wide lines.}
    \label{fig:nullvectors}
\end{figure}
Firstly, we may find a left null vector, $l_e$, of the incidence matrix: a linear combination of the branches such that
\begin{equation}
\sum_e l_e\Omega_{ev}=0.
\end{equation}
Geometrically, these null vectors correspond to loops in the circuit, where all branches in a loop have capacitive elements on them (see proof in Appendix \ref{app:math}). Physically, the Euler-Lagrange equations for these null vectors lead to the physical constraint that the voltages in such a \textbf{capacitive loop} vanishes
\begin{equation}
\sum_e l_e\frac{\partial E_\mathcal{C}}{\partial q_{e}}=0.
\label{eq:constraint_left}
\end{equation}
This constraint fixes one of the charge variables in terms of the others in the loop. We denote the set of such capacitive loops as $\Delta_C$. For a loop $Z\in\Delta_C$, the form of the null vectors is \begin{equation}
    l_e = \left\lbrace \begin{array}{ll} \pm 1 &\ e \in Z \\ 0 &\ e\notin Z \end{array}\right..
\end{equation}
The $\pm 1$ sign is based on the orientation of the edges in the loop: all signs are $+1$ when the edges are all oriented so they touch tip-to-tail: see the examples for more details in Sec.~\ref{sec:examples}.

Secondly, the right null vectors $r_v$ of $\Omega_{ev}$ also imply constraints. These are the combinations of nodes such that
\begin{equation}
\sum_v \Omega_{ev} r_v=0.
\end{equation}
The geometrical meaning of these vectors is that they represent \textbf{inductively shunted islands}. The constraint associated with these vectors is that the total current entering the island must equal the current exiting the island. In this case, the Euler-Lagrange equations read
\begin{equation}
\sum_v \frac{\partial E_\mathcal{I}}{\partial \phi_{v}}r_v=0.
\label{eq:constraint_right}
\end{equation}
We denote the set of all subsets of vertices that correspond to such inductively shunted islands as $\Gamma_I$. Note that for any such island $J\in \Gamma_I$, we have $J\subseteq \mathcal{V}$.  The explicit form of the null vector $r_v$ becomes \begin{equation}
    r_v = \left\lbrace\begin{array}{ll} 1 &\ v\in J \\ 0 &\ v\notin J\end{array}\right..
\end{equation}

These left and right null vectors correspond to non-dynamical variables that can be removed from the Lagrangian.
After removing the variables associated with the left and right null vectors of $\Omega_{ev}$, the Lagrangian will have fewer coordinates.  The linear combinations of coordinates which remain are the non-null vectors of $\Omega_{ev}$, which becomes a non-degenerate matrix $\tilde\Omega_{ij}$ in the subspace of remaining modes.  The total energy $E_{\mathrm{tot}}$, restricted to the corresponding constrained subspace, is the Hamiltonian $H$ for the circuit.  Hence, we can find a symplectic form for the remaining coordinates, and a Hamiltonian function to quantize.  
 
 At this point, we can formally quantize the theory, as we describe in Sec.~\ref{sec:qm}, by replacing Poisson brackets with quantum commutators.  The two steps above are thus \emph{required} prior to quantization, but there is another way to easily remove some degrees of freedom.  Suppose that, as in Fig.~\ref{fig:nullvectors}, there is a vertex (or more generally, a set of vertices) that are only connected to the rest of the circuit via capacitive edges.  For simplicity here, let us focus on the case where, as in Fig.~\ref{fig:nullvectors}, it is a single vertex $v_2$.  Then, the Lagrangian $L$ in Eq.~(\ref{eq:main}), and therefore the Hamiltonian $H$, is invariant under constant shifts in $\phi_{v_2}$:
 \begin{equation}
    H( \phi_{v_2}) = H( \phi_{v_2} + c).
 \end{equation}
 Noether's Theorem states that such continuous symmetry allows one to remove one dynamical variable (i.e. one $q$ and one $\phi$) from the problem.\footnote{More formally, there exists a quotient of the classical phase space (intuitively amounting to ignoring the dynamics of the conserved quantity and its conjugate) which remains a symplectic manifold. In the mathematical literature, this is known as the Marsden-Weinstein-Meyer Theorem, and the method of symplectic reduction.}
 
 \subsection{Choosing canonically conjugate variables}\label{sec:tree}

As we will see when quantizing the theory, it is desirable to find $n$ pairs of ``canonical coordinates": \begin{equation}\label{eq:trivialcom}
    \{x_i, p_j\} = \delta_{ij}.
\end{equation}
This is because, in quantum mechanics, Poisson brackets become quantum mechanical commutators.
However, when calculating the Poisson brackets using Eq.~(\ref{eq:pbdefn}) in our formalism, we find that the Poisson brackets correspond to the element of the symplectic matrix
\begin{equation}
    \{\phi_j, q_i\} = \tilde\Omega_{ij}^{-1}.
\end{equation}
This does not jeopardize our ability to quantize the circuit, but it is still desirable to find coordinates where Eq.~ (\ref{eq:trivialcom}) holds.  
In this section, we show how to find charge and flux variables that achieve this. For simplicity, we will focus on an example presented in Fig.~\ref{fig:nullvectors}, and relegate the general argument to  Appendix \ref{app:math}.

Our argument is exclusively about the second term in the Lagrangian of the system in Eq.~(\ref{eq:mainintro}), which reads as  $\sum_{e,v} q_e \Omega_{ev} \dot\phi_v$. We aim to find a set of $(\Phi_i,Q_i)$ variables for which this term takes the form of $\sum_i Q_i \dot\Phi_i$. As we discussed before in Sec.~\ref{sec:symplectic_geometry}, in part this will mean removing all left and right null vectors.  Remarkably, in the construction that follows, these null vectors will be automatically removed.

First, we choose a spanning tree $\mathcal{T} \subseteq \mathcal{C}$ of the capacitive subgraph.  Here a spanning tree corresponds to a set of capacitive branches $\mathcal T \subset \mathcal C$ so that every node adjacent to some branch in $\mathcal C$ is adjacent to some branch in $\mathcal T$, but without any cycles. Schematically, one can manufacture a spanning tree by simply choosing an edge to delete from every cycle in $\Delta_C$.  For example in Figure \ref{fig:nullvectors}, we can take the spanning tree to be \begin{equation}
     \mathcal{T} = \lbrace e_1,e_2,e_3\rbrace.
 \end{equation}
 An alternative choice is $\lbrace e_1,e_2,e_4\rbrace$, and the choice made does not affect the spectrum or dynamics of the resulting circuit (the resulting Hamiltonians differ by a canonical transformation).
 Recalling the definition of branch flux in Eq.~(\ref{eqn:branchflux}), we define branch fluxes $\Phi_f$ for $f\in \mathcal{T}$ as our fundamental degrees of freedom.  One can explicitly show that \begin{equation}
   \sum_{e\in\mathcal{C}}\sum_{v\in \mathcal{V}} q_e \Omega_{ev}\dot\phi_v = \sum_{f\in\mathcal{T}} Q_f \dot\Phi_f,
\end{equation}
where $Q_f$ is a linear combination of the original $q_e$ variables with integer coefficients $0, \pm 1$.  In our example, we find \begin{equation}
     \sum_{e\in\mathcal{C}}\sum_{v\in \mathcal{V}} q_e \Omega_{ev}\dot\phi_v = q_{e_1}\dot\Phi_{e_1} + q_{e_2}\dot\Phi_{e_2} + (q_{e_3}+q_{e_4}) \dot\Phi_{e_3} = Q_{e_1}\dot\Phi_{e_1}+Q_{e_2}\dot\phi_{e_2}+Q_{e_3}\dot\Phi_{e_3}. \label{eq:sec25ex}
\end{equation}We will show how to use this procedure in the additional examples of Sec.~\ref{sec:examples}.  

Observe that in this construction, we have immediately removed one linear combination of node fluxes on each disconnected subgraph of $\mathcal{C}$.  In Fig.~\ref{fig:nullvectors}, we can see the following right null vectors are no longer dynamical degrees of freedom:  $\phi_{v_6}$, $\phi_{v_1}+\phi_{v_2}+\phi_{v_3}$, $\phi_{v_4}+\phi_{v_5}$.  The three branch flux variables on the spanning tree are linearly independent to these non-dynamical modes.  Similarly, the linear combinations of charges that are removed are the unphysical ones corresponding to charges flowing around capacitive loops.  In our example, this combination of branch charges $q_{e_3}-q_{e_4}$  is orthogonal to the physical degree of freedom $Q_{e_3} = q_{e_3}+q_{e_4}$ that arose in Eq.~(\ref{eq:sec25ex}).   Happily, all the needed left and right null  vectors of $\Omega_{ev}$ are automatically removed by this ``spanning tree construction" of choosing good coordinates.  

\section{Circuit quantization}\label{sec:qm}
With the understanding of how to use our formalism to describe  arbitrary non-dissipative circuits at the classical level, we now discuss how to carry out circuit quantization. Suppose that the circuit is described by the Lagrangian
\begin{equation}\label{eqn:qlag}
    L = -H(q_i,\phi_j)+\sum_{i,j}  q_i \tilde\Omega_{ij} \dot \phi_j = -H(Q_f,\Phi_f) + \sum_{f\in\mathcal{T}}Q_f \dot\Phi_f.
\end{equation}
Note that we have used the spanning tree construction of Sec.~\ref{sec:tree} to choose good variables to quantize. The equations of motion for the non-dynamical coordinates are constraints that should also be solved before quantization.

Referring to the definition of the Poisson brackets in Eq.~(\ref{eq:pbdefn}), we see that 
\begin{equation}
    \{\Phi_i,Q_j\} = \delta_{ij}.
\end{equation}
To quantize the circuit, we define commutation relations between the charge operator $\hat{Q_i}$ and the flux operator $\hat{\Phi_j}$ as
\begin{equation}
    [\hat\Phi_i,\hat{Q}_j] = \mathrm i \hbar \delta_{ij},\label{eq:quantization}
\end{equation}
as long as both $\hat{Q_i}$ and $\hat{\Phi_j}$ are non-compact (i.e., not periodically identified) variables.  

The quantum mechanical Hamiltonian is simply $\hat{H}(\hat Q_i,\hat \Phi_i)$, where as in the classical setting, we must first restrict to the constrained subspace by solving for left/right null vectors of $\Omega_{ev}$.  Since in our theory, all circuit elements are purely capacitive or purely inductive, there is no ambiguity about the operator ordering of non-commuting $\hat Q_i$ and $\hat \Phi_i$, so $\hat{H}$ is a uniquely specified operator.  This completes our formulation of circuit quantization for non-dissipative circuits.

Such a simple and intuitive solution to circuit quantization is possible because we are able to find a globally constant Poisson bracket on the classical phase space. There do exist Hamiltonian systems where this task cannot be achieved.\footnote{A simple example is a system whose classical phase space is the sphere.  This system is quantizable and gives a single spin-$j$ Hilbert space of dimension $2j+1$ \cite{woodhouse}, when the volume of the sphere is $2\pi\hbar (2j+1)$.}
The most notable property of our quantization procedure, and our formalism on the whole, is that Eq.~(\ref{eq:quantization}) is agnostic to the form of the Hamiltonian; it depends only on the capacitive subgraph of the circuit.

When considering circuits, it is often the case that some of the flux coordinates are periodically identified: \begin{equation}
    \Phi_i \sim \Phi_i + \phi_0, \label{eq:periodicphi}
\end{equation}
where $\phi_0$ is the flux quantum. For example, it is generally assumed that flux across a Josephson junction shunted by a capacitor is periodic\footnote{However, there are other approaches that assume no periodicity in this case~\cite{thanh_le_building_2020}.}.  It is well-known \cite{periodic} that in such circuits, $\Phi_i$ is not a well-defined operator; the well-defined operators become $\exp[2\pi \mathrm{i}\Phi/\phi_0 \cdot n]$ for integer $n$.  Our quantization prescription does not change in this scenario: one simply avoids writing Eq.~(\ref{eq:quantization}) and instead writes \begin{equation}\label{eq:expcommutator}
    [\mathrm{e}^{2\pi \mathrm{i} \hat\Phi/\phi_0},\hat{Q}/2e] = -\mathrm{e}^{2\pi \mathrm{i}\hat\Phi/\phi_0},
\end{equation}
which is now expressed in terms of globally-defined operators. 

Let us remark on what has transpired from a mathematical perspective.  For simplicity let us assume a single dynamical $Q$ and $\Phi$ variables; the argument immediately generalizes to the higher dimensional case. The original classical phase space is $M=\mathbb{R}^{2}$.  The periodic identification of $\Phi$ corresponds to identifying points in phase space when the $\Phi$ coordinates are related as in Eq.~(\ref{eq:periodicphi}). At the classical level, this turns the phase space into $\mathbb{R}\times \mathrm{S}^1$, where $Q\in \mathbb{R}$ and $\Phi \in \mathrm{S}^1$.  Here $\Phi \in \mathrm{S}^1$ lives on a circle, which is equivalent to the real line with all points shifted by $\phi_0$ identified.  Because the manifold $\mathbb{R}\times \mathrm{S}^1$ is a non-singular quotient of $\mathbb{R}^2$, there exists \cite{manifold} a symplectic form $\omega$ on $\mathbb{R}\times \mathrm{S}^1$ which is equal to the inclusion of the original symplectic form on $\mathbb{R}^2$.  In more physical terms, this means that we can use the same commutation relations to quantize the reduced phase space, provided we only study well-defined functions as in Eq.~(\ref{eq:expcommutator}).  Note that in quantum mechanics, $\Phi$ becoming periodic means that $Q$ becomes integer-valued; this has no classical analogue, and the theory of geometric quantization was developed to explain this phenomenon for general symplectic manifolds \cite{woodhouse}.  In this paper, we will be dealing with classical phase spaces that are quotients of $\mathbb{R}^{2n}$ by periodically identifying $\Phi$ coordinates, so these subtleties will end up unimportant.

\section{Examples and generalizations}\label{sec:examples}

In this section, we provide examples of how to use our formalism to efficiently derive a quantizable Hamiltonian for various circuits. 

\subsection{Inductively and capacitively shunted islands}

As a first example to understand how we can eliminate variables associated with unphysical degrees of freedom, we consider an inductively shunted island. An inductively shunted island contains a set of nodes that lie on a path consisting of only capacitive branches or a single node that is connected only to inductive elements. For formal definitions and other relevant discussions, see Appendix \ref{app:math}. As we discussed before, an inductively shunted island corresponds to a right null vector of the incidence matrix $\Omega_{ev}$. Figure \ref{fig:examples_1}(a) shows an example of a circuit that has two inductively shunted islands: a node is connected to two inductors, $L_1$ and $L_2$, and a quantum phase slip element with energy $E_Q$ is also shunted by the inductors. The circuit has three node variables but using the constraints for right null vectors [see Eq.~(\ref{eq:constraint_right})], we can eliminate two flux variables.

\begin{figure}
    \centering
    \includegraphics[width = 8.6cm]{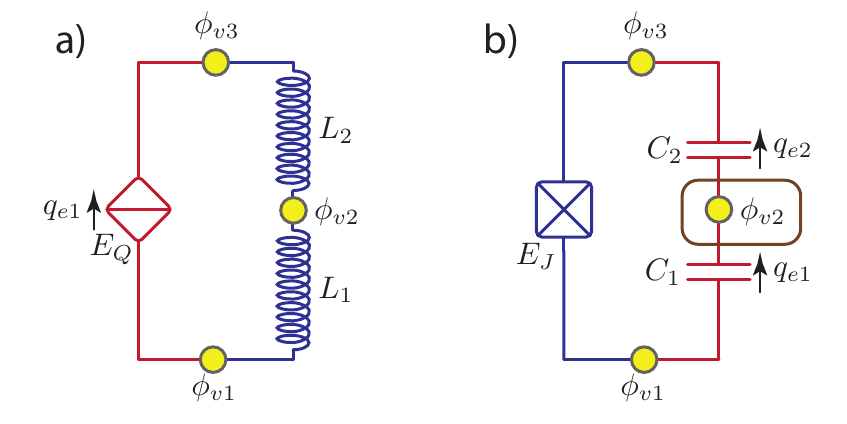}
    \caption{\textbf{Inductively and capacitively shunted islands.} (a) The inductively shunted island at node flux $\phi_{v_2}$ corresponds to the right null vector of the incidence matrix $\Omega_{ev}$; thus, we need to remove such variable to be able to arrive to a self-consistent Hamiltonian. (b) The capacitively shunted island at $\phi_{v_2}$ corresponds to a Noether charge in the circuit. It is possible but not required to remove such a variable to be able to define conjugate pairs.}
    \label{fig:examples_1}
\end{figure}

To start, we write down the incidence matrix of the capacitive subgraph
\begin{equation}
\Omega = 
\begin{pmatrix}
-1 & 0 & 1
\end{pmatrix},
\end{equation}
where the single row corresponds to the $q_{e_1}$ branch charge, and the three columns refer to the three flux node variables. The capacitive and the inductive energies are 
\begin{subequations}
\begin{align}
E_\mathcal{C} &= -E_Q\cos\left(2\pi\frac{q_{e_1}}{2e}\right), \\
E_\mathcal{I} &= \frac{1}{2L_1}\left(\phi_{v_1}-\phi_{v_2}\right)^2 + \frac{1}{2L_2}\left(\phi_{v_2}-\phi_{v_3}\right)^2.
\end{align}
\end{subequations}
Thus, based on Eq.~(\ref{eq:mainintro}) the Lagrangian is
\begin{equation}
\begin{aligned}
  L &= \sum_{e,v}q_e\Omega_{ev}\dot{\phi}_v - E_\mathcal{I} - E_{\mathcal{C}}  \\ 
  &= q_{e_1}(\dot\phi_{v_3}-\dot\phi_{v_1}) + E_Q\cos\left(2\pi\frac{q_{e_1}}{2e}\right) - \frac{1}{2L_1}\left(\phi_{v_1}-\phi_{v_2}\right)^2 - \frac{1}{2L_2}\left(\phi_{v_2}-\phi_{v_3}\right)^2.
  \end{aligned}
\end{equation}

We notice that there are two inductively shunted islands and hence two right null vectors
\begin{equation}
\Gamma_I
= \{\{v_1, v_3\},\{v_2\}\}\longleftrightarrow r_v = \begin{pmatrix}
    1 \\
    0 \\
    1
\end{pmatrix},
\begin{pmatrix}
    0 \\
    1 \\
    0
\end{pmatrix},
\end{equation}
indicating that $\phi_{v_1}+\phi_{v_3}$ and $\phi_{v_2}$ are non-dynamical variables. Furthermore, 
based on the constraints imposed by the right null vectors\footnote{The constraints provided by each of the two null vectors are the same in this example, but need not be in general.} [see Eq.~(\ref{eq:constraint_right})], we can write that
\begin{equation}\label{eq:nontrivcons}
    \frac{\phi_{v_1} -\phi_{v_2}}{L_1} +  \frac{\phi_{v_3} - \phi_{v_2}}{L_2} = 0.
\end{equation}

After some algebra, we can simplify the Lagrangian  such as 
\begin{equation}
L = Q\dot{\Phi} + E_Q\cos\left(2\pi\frac{Q}{2e}\right) - \frac{1}{2(L_1+L_2)}\Phi^2, 
\end{equation}
where $\Phi=\phi_{v_3} - \phi_{v_1}$, and $Q=q_{e_1}$. We can see that this geometrical method reproduced the well-known result of how to add inductors together. The system is left with one degree of freedom, and the symplectic form (the first term in the Lagrangian)  indicates that the conjugate variables are $\{\Phi,Q\}=1$. Finally, the Hamiltonian is
\begin{equation}
H = - E_Q\cos\left(2\pi\frac{Q}{2e}\right) + \frac{1}{2(L_1+L_2)}\Phi^2. 
\end{equation}
We remark that the conjugate pairs in this example were necessarily $Q$ and $\Phi$ because there is only one capacitive branch and thus it must have been included in any spanning tree.

 As a second example, we consider a capacitively shunted island, for example, a node between two capacitors [see Fig.~\ref{fig:examples_1}(b)]. A capacitively shunted island is a set of vertices that can be traversed by moving only along branches with inductive elements. As before, a node connected only to capacitors constitutes its own island. In our formalism, the presence of such an island does not lead to a null vector of the adjacency matrix. Thus, removing such variables is not necessary to define a symplectic form and to carry out quantization (see Appendix \ref{app:noether} for an example). However, we can remove such degrees of freedom since capacitively shunted islands correspond to Noether currents, which represent an additional constraint.  
Physically, this constraint corresponds to Kirchhoff's current law, i.~e., the current through a network of capacitors is conserved. 

In this example [see Fig.~\ref{fig:examples_1}(b)], the circuit contains a Josephson junction and two capacitors in a loop; the Lagrangian describing this circuit is given by Eq.~(\ref{eq:mainintro})
\begin{equation}
    L = q_{e_1}(\dot \phi_{v_2} - \dot \phi_{v_1} ) + q_{e_2} (\dot\phi_{v_3} - \dot \phi_{v_2} ) + E_J \cos\left(2 \pi\frac{\phi_{v_1} - \phi_{v_3}}{\phi_0}\right) - \frac{1}{2 C_1} q_{e_1}^2 - \frac{1}{2 C_2} q_{e_2}^2. 
\end{equation}
Our spanning tree\footnote{We remark that, again, the choice of spanning tree on this circuit is unique.} construction provides for us the fact that the variables
\begin{equation}
\begin{aligned}
    Q_{e_1} &= q_{e_1}, \\ 
    Q_{e_2} &= q_{e_2}, \\ 
    \Phi_{e_1} &= \phi_{v_2} - \phi_{v_1}, \\ 
    \Phi_{e_2} &= \phi_{v_3} - \phi_{v_2} 
    \end{aligned}
\end{equation}
form canonical conjugate pairs with 
\begin{equation}
    \{\Phi_{e'},Q_{e} \} = \delta_{ee'}.
\end{equation}
Written in terms of these variables, the Lagrangian is
\begin{equation}
    L = Q_{e_1} \dot \Phi_{e_1 } + Q_{e_2} \dot \Phi_{e_2} + E_J\cos\left(2 \pi\frac{\Phi_{e_1} + \Phi_{e_2}}{\phi_0} \right) - \frac{1}{2 C_1} Q_{e_1}^2 - \frac{1}{2 C_2}Q_{e_2}^2 .
\end{equation}
At this point, the circuit can be quantized. However, we can remove one more variable by noticing that the constraint due to the Noether current is
\begin{equation}\label{eqn:smallex}
    \frac{\delta S }{\delta \phi_{v_2} } = \dot q_{e_1} - \dot q_{e_2}= \dot Q_{e_1} - \dot Q_{e_2} = 0.
\end{equation}
This is easy to understand as simply the conservation of charge on the two inner plates connecting the capacitors.
A nonzero constant of integration would only represent a time-independent charge trapped between the plates therein. 
In this case, we can write 
\begin{equation}
    Q_{e_1} \dot \Phi_{e_1} + Q_{e_2} \dot\Phi_{e_2} = Q(\dot\Phi_{e_1} + \dot\Phi_{e_2}),
\end{equation}
where we redefine $Q = Q_{e_1} = Q_{e_2} $ and $\Phi = \Phi_{e_1}+\Phi_{e_2}  $  with $\{\Phi,Q\} = 1$. 
This choice is just a reflection of the fact that a free particle is ``integrated out" by using Eq.~(\ref{eqn:smallex}). Thus, the Hamiltonian reads
\begin{equation}
    H = \frac{1}{2}\left(\frac{1}{C_1} + \frac{1}{C_2}\right)Q^2- E_J\cos\left(2 \pi\frac{\Phi}{\phi_0} \right)  .
\end{equation}
In this way, we see that Noether charges provide instructions on how to add capacitive circuit elements in series. 

\subsection{The dualmon qubit}

We continue the series of examples with the circuit that motivated our discussion, the dualmon circuit~\cite{le_doubly_2019}. In this device, a Josephson junction and a quantum phase slip element form a loop [see Fig.~\ref{fig:examples}(a)]. In the following, we analyze the circuit in the absence of offset charges and external fluxes. Later, we show how these external parameters can be added to our formalism.

First, we define the flux variables at the two nodes, $\phi_{v_1}$ and $\phi_{v_2}$, and the branch charge across the capacitive element, $q_{e_1}$. The incidence matrix of the capacitive subgraph is simply 
\begin{equation}
\Omega = 
\begin{pmatrix}
-1 & 1
\end{pmatrix}.
\end{equation}

The circuit has one inductively shunted island, and no capacitive loop, thus the nullvectors are
\begin{subequations}
\begin{align}
\Gamma_I = \{\{v_1, v_2\}\}&\longleftrightarrow r_v = \begin{pmatrix}
    1 \\
    1
\end{pmatrix} \\
\Delta_C = \{\}&\longleftrightarrow l_e = \begin{pmatrix}
    0
\end{pmatrix}.
\end{align}
\end{subequations}
Based on Eq.~(\ref{eq:constraint_right}), the constraint arising from the right null vector is trivial, and does not reduce the number of variables in the circuit. However, the identification of the right null vector itself formally removes a degree of freedom, which can also be seen in that the variable $\phi_{v_1} + \phi_{v_2}$ never appears in the Lagrangian. 
Formally, this null vector is appropriately removed by the spanning tree construction.

From the spanning tree construction, we see that $Q = q_{e_1}$ is conjugate to $\Phi = \phi_{v_2} - \phi_{v_1} $ so that $\{\Phi,Q\} = 1$.
Furthermore, if the Josephson energy is $E_J$ and the quantum phase energy is $E_Q$, the capacitive and inductive energies in the circuit are
\begin{subequations}
\begin{align}
E_\mathcal{C} &= -E_Q\cos\left(2\pi\frac{Q}{2e}\right), \\
E_\mathcal{I} &= -E_J\cos\left(2\pi\frac{\Phi}{\phi_0}\right).
\end{align}
\end{subequations}
Using Eq.~(\ref{eq:mainintro}), we write the Lagrangian of the circuit as
\begin{equation}
\begin{aligned}
  L &= \sum_{e,v}q_e\Omega_{ev}\dot{\phi}_v - E_\mathcal{I} - E_{\mathcal{C}}  \\ 
  &= Q\dot \Phi + E_Q\cos\left(2\pi\frac{Q}{2e}\right) + E_J\cos\left(2\pi\frac{\Phi}{\phi_0}\right).
  \end{aligned}
\end{equation}
 
Finally, the Hamiltonian function takes the form of
\begin{equation}
  H = -E_Q\cos\left(\frac{2\pi}{2e}Q\right) - E_J\cos\left(\frac{2\pi}{\phi_0}\Phi\right).
\end{equation}

\begin{figure}
    \centering
    \includegraphics[width = \columnwidth]{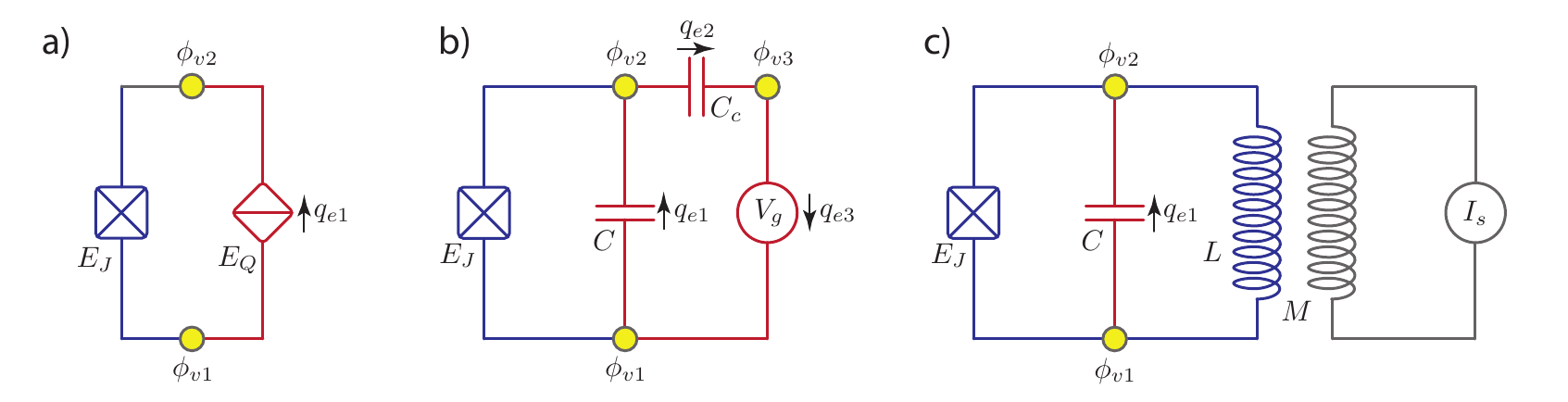}
    \caption{\textbf{Examples for quantum circuits in the framework of symplectic geometry.} The branches are colored based on the type of element they contain; red: capacitive elements, blue: inductive elements. The circuits are (a) dualmon circuit, (b) offset-charge-sensitive transmon, (c) external-flux-sensitive fluxonium.} 
    \label{fig:examples}
\end{figure}

\subsection{Offset-charges and external voltages in the transmon}

Now, we show how we can incorporate the offset charges in our description through the example of the offset-charge sensitive transmon or Cooper pair box [see Fig.~\ref{fig:examples}(b)]. The circuit contains a single Josephson junction with Josephson energy of $E_J$ shunted by a capacitor $C$, and coupled with capacitance $C_c$ to a classical gate voltage $V_g$ that models the effects of offset charges.  A key observation is that \emph{in our formalism, we include voltage sources by treating them as additional capacitive edges.} While they do not end up leading to new degrees of freedom, this is how they are straightforwardly handled in our framework.

In our example, the circuit has three nodes ($v_i$, where $i=1,2,3$), 
 and three capacitive branches, including the voltage source ($e_i$, where $i=1,2,3$). Thus, the capacitive incidence matrix is
\begin{equation}
\Omega = 
\begin{pmatrix}
  -1 & 1 & 0 \\
  0 & -1 & 1  \\
  1 & 0 & -1  
\end{pmatrix},
\end{equation}
where the columns correspond to the three vertices and rows to the three branches. By inspection, we note that the inductively shunted islands, capacitive loops, and the corresponding null vectors in the circuit are
\begin{subequations}
\begin{align}
\Gamma_I = \{\{v_1, v_2, v_3\}\} &\longleftrightarrow r_v = \begin{pmatrix}
    1 \\
    1 \\
    1
\end{pmatrix}, \\
\Delta_C = \{\{e_1,e_2,e_3\}\}
 &\longleftrightarrow l_e = \begin{pmatrix}
    1 & 1 & 1
\end{pmatrix}. 
\end{align}
\end{subequations}
The capacitive and inductive energies are
\begin{subequations}
\begin{align}
E_\mathcal{C} &= \frac{q_{e_1}^2}{2C} + \frac{q_{e_2}^2}{2C_c} +q_{e_3}V_g, \\
E_\mathcal{I} &= -E_J\cos\left(2\pi\frac{\phi_{v_2}-\phi_{v_1}}{\phi_0}\right).
\end{align}
\end{subequations}

Thus, based on Eq.~(\ref{eq:mainintro}) the Lagrangian of the circuit reads
\begin{equation}
\begin{aligned}
  L &= \sum_{e,v}q_e\Omega_{ev}\dot{\phi}_v - E_\mathcal{I} - E_{\mathcal{C}}  \\ 
  &= q_{e_1}(\dot\phi_{v_2}-\dot\phi_{v_1}) + q_{e_2}(\dot\phi_{v_3} - \dot\phi_{v_2}) + q_{e_3} (\dot\phi_{v_1} - \dot\phi_{v_3}) - \frac{q_{e_1}^2}{2 C}  - \frac{q_{e_2}^2}{2 C_c}    - q_{e_3} V + E_J\cos\left[\frac{2\pi}{\phi_0}(\phi_{v_2}-\phi_{v_1})\right].
  \end{aligned}
\end{equation}

In this example, the choice of spanning tree is not unique. We will choose 
\begin{equation}
   \mathcal T = \{e_1, e_2\} 
\end{equation}
as a spanning tree, and because the sum of the branch fluxes in the loop vanishes
\begin{equation}
    \phi_{e_3} = - \phi_{e_1}- \phi_{e_2}.
\end{equation}
Then we introduce the new variables
\begin{equation}
    \begin{aligned}
        Q_{e_1} &= q_{e_1} - q_{e_3}, \\ 
        Q_{e_2} &= q_{e_2} - q_{e_3}, \\
        \Phi_{e_1} &= \phi_{e_1}, \\ 
        \Phi_{e_2} &= \phi_{e_2}. 
    \end{aligned}
\end{equation}
We are free to rewrite 
\begin{equation}
    L = Q_{e_1} \dot\Phi_{e_1}  + Q_{e_2} \dot\Phi_{e_2} - \frac{(Q_{e_1} + q_{e_3})^2}{2 C} - \frac{(Q_{e_2} + q_{e_3} )^2 }{2 C_c} - q_{e_3} V_g + E_J \cos\left[ \frac{2\pi}{\phi_0} \Phi_{e_1}\right].
\end{equation}

Only the capacitive loop (left null vector) gives a nontrivial constraint based on Eq.~(\ref{eq:constraint_left}) since
\begin{equation}
  \frac{Q_{e_1} + q_{e_3}}{C} + \frac{Q_{e_2}+q_{e_3}}{C_c} + V_g = 0.
\end{equation}
which can be used to fix $q_{e_3}$ in terms of $Q_{e_1}$ and $Q_{e_2}$.
Further, there is a Noether current which produces the constraint 
\begin{equation}
    0 = \frac{\delta S }{\delta \phi_{v_3} } = \dot q_{e_2} - \dot q_{e_3} = \dot Q_{e_2} = 0.
\end{equation}

Choosing the constant of integration to be zero, and defining $Q= Q_{e_1}$ and $\Phi = \Phi_{e_1}$ with $\{\Phi,Q\}=1$, we arrive at the Lagrangian in the form of
\begin{equation}\label{eq:L430}
L= Q\dot{\Phi} - \frac{(Q-C_cV_g)^2}{2(C+C_c)} + E_J\cos\left(2\pi\frac{\Phi}{\phi_0}\right),
\end{equation}
after dropping a constant term.   
Thus, we can write the Hamiltonian function in the well-known form
\begin{equation}
H = \frac{(q-C_cV_g)^2}{2(C+C_c)} - E_J\cos\left(2\pi\frac{\Phi}{\phi_0}\right)
\end{equation}
From this point, it is straightforward to quantize $H$ even with compact variable $\Phi$ [see Eq.~(\ref{eq:expcommutator})].

\subsection{External flux in the fluxonium}

Now, we turn our attention to the case of external fluxes. It is generally straightforward to include external flux biases; here, we model it by coupling the circuit inductively to a loop with current $I_s$  flowing [see Fig.~\ref{fig:examples}(c)]. If the mutual induction is $M$, the relevant energy term is 
\begin{equation}\label{eqn:biasI}
    E = M I_s (\phi_{v_1} - \phi_{v_2}). 
\end{equation}

For the sake of brevity, we will simply write out the Lagrangian of the fluxonium following similar procedures as in the first two examples:
\begin{equation}
L = q_{e_1}(\dot\phi_{v_2}-\dot\phi_{v_1}) +\frac{q_{e_1}^2}{2C} + E_J\cos\left(2\pi\frac{\phi_{v_2}-\phi_{v_1}}{\phi_0} \right) -\frac{1}{2L}(\phi_{v_2}-\phi_{v_1})^2 -M I_s (\phi_{v_2}-\phi_{v_1}),
\end{equation}
which, after finding the spanning tree, can be further written as
\begin{equation}
    L = Q\dot \Phi + \frac{Q^2}{2C} + E_J \cos \left(2\pi\frac{\Phi}{\phi_0} \right) - \frac{1}{2 L} (\Phi - \phi_{\mathrm{ext}})^2,
\end{equation}
where $\Phi = \phi_{v_2}-\phi_{v_1}$, $Q = q_{e_1}$,  and $\phi_{\mathrm{ext}}= - L M I_s$.  We have neglected an overall constant contribution to $L$. Thus, the Hamiltonian of the circuit reads 
\begin{equation} \label{eq:endof44}
    H = \frac{Q^2}{2C} - E_J \cos \left(2\pi\frac{\Phi}{\phi_0} \right) + \frac{1}{2 L} (\Phi - \phi_{\mathrm{ext}})^2,
\end{equation}
where the sole conjugate pair is consists of $Q$ and $\Phi$. Again, $\phi_{\mathrm{ext}}$ can be time-dependent.

\subsection{Time-dependent external charges or fluxes}

In the example above, we introduced the external flux by inductively coupling the circuit to an external current source. There is an alternative way in our description to introduce external flux: we can add one or more new branches with a voltage source to a loop. To understand this construction, we recall that Faraday's law states that in the presence of time-dependent magnetic fields, the sum of voltages in a loop equals the rate of change of the magnetic field. Thus, if $e_i$ ($i=1,2,\dots,n$) are the physical capacitive branches in a loop
\begin{equation}
	\sum_{i=1}^n \phi_{e_i} + \phi_{\mathrm{ext}} = 0,
\end{equation}
where $\phi_{\mathrm{ext}}$ is the external flux piercing the loop. This suggests that we can think of the external flux as just another branch in the loop with an additional fixed flux $\phi_{\mathrm{ext}}$. However, a natural question arises at this point: where should one put this additional branch in the loop? As discussed in Refs.~\cite{jens_timedep,PhysRevApplied.19.034031}, the various Hamiltonians are linked by a gauge transformation. 

\begin{figure}
    \centering
    \includegraphics[width = 8.6cm]{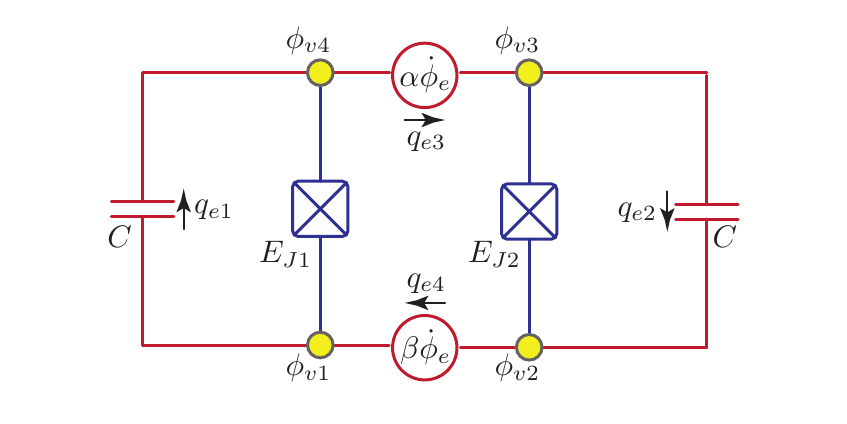}
    \caption{\textbf{Circuit in time-dependent external flux.} A flux-tunable transmon with two Josephson junctions shunted by two capacitors. The external flux in the loop enclosed by the two junctions can be modeled as one or more additional batteries in a loop, as long as the total flux provided by these batteries equals the external flux. The batteries are modeled as capacitive elements.} 
    \label{fig:phiext}
\end{figure}

Figure \ref{fig:phiext} shows an example of how one can place the ``flux batteries'' in a circuit to capture the external flux. The circuit is flux-tunable transmon, where two Josephson junctions, $E_{J1}$ and $E_{J2}$, are shunted by capacitors $C$. Notice that we have the freedom to place the external flux batteries in various ways, for example, here we choose to put two batteries with fluxes of $\alpha\phi_{\mathrm{ext}}$ and $\beta\phi_{\mathrm{ext}}$ in the loop. The condition of $\alpha+\beta=1$ ensures that the total external flux in the loop is $\phi_{\mathrm{ext}}$. This approach makes the circuit artificially a four-node circuit, but using the constraints outlined in this paper, we can end up with a single degree of freedom. 

To start, we recall that batteries are capacitive elements in our formalism, and using the variables in Fig.~\ref{fig:phiext}, we write down the Lagrangian
\begin{equation}
\begin{aligned}
	L &= q_{e_1}(\dot\phi_{v_4}-\dot\phi_{v_1}) + q_{e_2}(\dot\phi_{v_2}-\dot\phi_{v_3}) +q_{e_3}(\dot\phi_{v_3}-\dot\phi_{v_4}) + q_{e_4}(\dot\phi_{v_1}-\dot\phi_{v_2})+ \\
 & + E_{J1}\cos\left(2\pi\frac{\phi_{v_1}-\phi_{v_4}}{\phi_0} \right)+ E_{J2}\cos\left(2\pi\frac{\phi_{v_3}-\phi_{v_2}}{\phi_0} \right) - \frac{1}{2C}\left(q_{e_1}^2+q_{e_2}^2\right) + \\
 & - \beta\dot\phi_{\mathrm{ext}}q_{e_4} - \alpha\dot\phi_{\mathrm{ext}}q_{e_3},
\end{aligned}
\end{equation}
where the last two lines are the contribution of the voltage sources. Using the constraint arising from the capacitive loop $\Delta_C=\{\{e_1,e_4,e_2,e_3\}\}$, and integrating out $q_{e_3}$ and $q_{e_4}$, we arrive at the Lagrangian 
\begin{equation}
    L=Q\dot\Phi-\frac{1}{2}(\alpha-\beta)Q\dot\phi_{\mathrm{ext}}+ E_{J1}\cos\left(2\pi\frac{\Phi-\alpha\phi_{\mathrm{ext}}}{\phi_0} \right)+E_{J2}\cos\left(2\pi\frac{\Phi+\beta\phi_{\mathrm{ext}}}{\phi_0} \right)-\frac{1}{4C}Q^2,
\end{equation}
where $Q=q_{e_1}-q_{e_2}$ and $\Phi = \phi_{v_3}-\phi_{v_1}$. And finally, the Hamiltonian is 
\begin{equation}
    H=\frac{1}{4C}Q^2 - E_{J1}\cos\left(2\pi\frac{\Phi-\alpha\phi_{\mathrm{ext}}}{\phi_0} \right) - E_{J2}\cos\left(2\pi\frac{\Phi+\beta\phi_{\mathrm{ext}}}{\phi_0} \right)+\frac{1}{2}(\alpha-\beta)Q\dot\phi_{\mathrm{ext}},\label{eq:H_phi_ext}
\end{equation}
with conjugate pairs $\{\Phi,Q\}=1$.

Notice that the last term in Eq.~(\ref{eq:H_phi_ext}) depends on our choice of how to distribute the batteries in the loop. For example, in the ``irrotational gauge''~\cite{jens_timedep,PhysRevApplied.19.034031}, when $\alpha=\beta=\frac{1}{2}$, there is no term linear in $Q$. We can transform between the different gauges at the classical level by a time-dependent type-2 canonical transformation from $(Q,\Phi) \rightarrow (Q^\prime,\Phi^\prime)$. For example, if we take $\alpha=1$ and $\beta=0$, the Hamiltonian is 
\begin{equation}
    H=\frac{1}{4C}Q^2 - E_{J1}\cos\left(2\pi\frac{\Phi-\phi_{\mathrm{ext}}}{\phi_0} \right) - E_{J2}\cos\left(2\pi\frac{\Phi}{\phi_0} \right)+\frac{1}{2}Q\dot\phi_{\mathrm{ext}},
\end{equation}
while in the irrotational gauge
\begin{equation}
    \tilde H=\frac{1}{4C}\tilde Q^2 - E_{J1}\cos\left(2\pi\frac{\tilde\Phi-\frac{1}{2}\phi_{\mathrm{ext}}}{\phi_0} \right) - E_{J2}\cos\left(2\pi\frac{\tilde \Phi+\frac{1}{2}\phi_{\mathrm{ext}}}{\phi_0} \right).
\end{equation}
In this particular example, the generating function is \begin{equation}
    G(\Phi,\tilde Q,t) = \tilde Q \cdot \left(\Phi - \frac{\phi_{\mathrm{ext}}}{2} \right),
\end{equation}
which implies \begin{subequations}
    \begin{align}
        Q &= \frac{\partial G}{\partial \Phi} = \tilde Q, \\
        \tilde \Phi &= \frac{\partial G}{\partial \tilde Q} = \Phi-\frac{1}{2}\phi_{\mathrm{ext}}, \\
        \tilde H &= H + \frac{\partial G}{\partial t} = H - \frac{1}{2}Q\dot\phi_{\mathrm{ext}}.
    \end{align}
\end{subequations}

\subsection{Singular circuits}\label{sec:singular}

\begin{figure}
    \centering
    \includegraphics[width = 8.6cm]{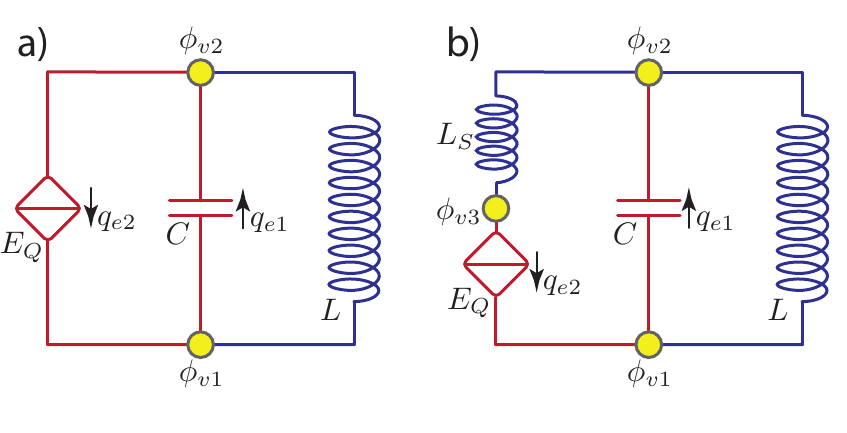}
    \caption{\textbf{Singular circuit.} (a) Example for a one-mode circuit where a quantum phase element has no series inductance attached to it. The circuit has a capacitive loop, $\mathcal{C}=\{\{q_{e_1},q_{e_2}\}\}$ leading to a non-analytical constraint. (b) When a series inductance is included in the circuit, the capacitive loop is broken, and the system has two degrees of freedom and an analytical Hamiltonian.} 
    \label{fig:singular}
\end{figure}

When a Josephson junction is not accompanied by a parallel capacitor, or a quantum phase slip element has no series inductor attached to it, the resultant circuit can become singular~\cite{rymarz_consistent_2022}. In this section, we analyze an example for such a singular circuit, which leads to a non-analytical Hamiltonian. 

The circuit is presented in Fig.~\ref{fig:singular}(a), and it has a quantum phase slip, a capacitor and an inductor all in parallel. Following our procedures, we arrive at a Lagrangian of 
\begin{equation}
  L = q_{e_1}(\dot\phi_{v_2}-\dot\phi_{v_1}) + q_{e_2}(\dot\phi_{v_1}-\dot\phi_{v_2}) - \frac{1}{2C}q_{e_1}^2 + E_Q\cos\left(2\pi\frac{q_{e_2}}{2e}\right) - \frac{1}{2L}\left(\phi_{v_1}-\phi_{v_2}\right)^2.
  \end{equation}
By looking at the geometry of the circuit, we notice that there is one capacitive loop $\mathcal{C}=\{\{q_{e_1},q_{e_2}\}\}$, which based on Eq.~(\ref{eq:constraint_left}) gives rise to the constraint
\begin{equation}
	V_Q\sin\left(2\pi\frac{q_{e_2}}{2e}\right) + \frac{q_{e_1}}{C} = 0.
	\label{eq:singular_circuit}
\end{equation}
This constraint, however, connects the two branch charges in a multi-valued, non-analytical way, leading to a Hamiltonian that is nonanalytical and can be evaluated only using numerical approaches. A detailed discussion on this topic can be found in Ref.~\cite{rymarz_consistent_2022}.
We denote 
\begin{equation}
    Q = q_{e_1} - q_{e_2}
\end{equation}
and write 
\begin{equation}\label{eq:singular_circuit_2}
    V_Q \sin \left(2 \pi\frac{q_{e_2}}{2 e} \right) + \frac{Q + q_{e_2}}{C} = 0 .
\end{equation}
By denoting the solution of Eq.~(\ref{eq:singular_circuit_2}) as $q_{e_2}\left(Q\right)$, and introducing $\Phi=\phi_{v_2}-\phi_{v_1}$, the Hamiltonian becomes
\begin{equation}
	H= \frac{1}{2C}\left[Q+q_{e_2}(Q)\right]^2  -E_Q\cos\left[2\pi\frac{q_{e_2}(Q)}{2e}\right]+ \frac{1}{2L}\Phi^2
\end{equation}
where the conjugate pairs are $\{\Phi,Q\}=1$.

In circuits which can be realized in current experiments, this singular behavior is not present because quantum phase slip elements always have a series inductor component $L_S$ [see Fig.~\ref{fig:singular}(b)].
This additional element transfers the singular one-mode circuit into a two-mode circuit that is analytical. The key observation is that the presence of the series inductance breaks the capacitive loop, and removes the left null vector of the circuit, thus, the constraint of Eq.~(\ref{eq:singular_circuit}) is lifted.  After a few steps, the Hamiltonian reads
\begin{equation}
    H = \frac{1}{2C}Q_{e_1}^2 + \frac{1}{2L}  \Phi_{e_1}^2 + \frac{1}{2 L_S} (\Phi_{e_1} + \Phi_{e_2})^2 - E_Q \cos\left[2 \pi \frac{Q_{e_2}}{2 e}\right]
\end{equation}
where 
\begin{equation}
    \begin{aligned}
        Q_{e_1} &= q_{e_1}, \\ 
        Q_{e_2} &= q_{e_2}, \\ 
        \Phi_{e_1} &= \phi_{v_2} - \phi_{v_1}, \\
        \Phi_{e_2} &= \phi_{v_1} - \phi_{v_3} ,
    \end{aligned}
\end{equation}
and the conjugate pairs are $\{\Phi_{e_1},Q_{e_1}\}=1$ and $\{\Phi_{e_2},Q_{e_2}\}=1$. A similar argument can be made for the case of parallel capacitors for Josephson junctions.

\subsection{Connecting to the existing Lagrangian formalism}
Finally, let us briefly discuss how our formalism straightforwardly reproduces the existing Lagrangian formalism in suitable limits.  As one example, consider a circuit with only linear capacitors, and inductive elements of any kind. The linear capacitor at the capacitive branch $e\in \mathcal C$ has capacitance $C_e$, while the inductive element at an inductive branch $e\in \mathcal{I}$ is described by an energy function $g_e$. In our formalism, the Lagrangian is
    \begin{equation}
    L(q_e,\phi_v,\dot\phi_v) = \sum_{e\in \mathcal C, v\in \mathcal V} q_e \Omega_{ev}\dot \phi_v - \sum_{e\in \mathcal C} \frac{q_e^2}{2C_e} - \sum_{e\in \mathcal{I}} g_e (\phi_e),
    \end{equation}
    where $\Omega_{ev}$ is the usual capacitive incidence matrix.
    Since $L$ is a quadratic function of $q_e$, but $L$ does not depend on the time derivatives of the charges $\dot{q}_e$, we can easily ``integrate out" $q_e$.  (This statement remains true in a quantum mechanical path integral).  Solving the Euler-Lagrange equation for $q_e$, we find \begin{equation}
        \Omega_{ev}\dot\phi_v = \frac{q_e}{C_e}.
    \end{equation}
    Then defining a capacitance matrix as  \begin{equation}
    C_{uv} = \sum_e C_e \Omega_{eu}\Omega_{ev},
    \end{equation}
    we find a Lagrangian expressed only in terms of node flux variables, as is standard in the literature \cite{vool_introduction_2017}:
    \begin{equation}
        L(\phi_v,\dot\phi_v) = \sum_{u,v\in \mathcal V} \frac{1}{2}C_{uv} \dot \phi_u \dot \phi_v - \sum_{e\in \mathcal{I}} g_e(\phi_e). \label{eq:capL}
    \end{equation}

\section{Outlook}
In this paper, we have developed a universal theory of circuit quantization for all LC circuits.  Our approach allows for the quantization of singular circuits, with arbitrary graph topology, and with arbitrary time-dependent sources.  The approach is inspired by symplectic geometry and graph theory, and the quantization prescription depends only on the topology of the capacitive subgraph, but not on which elements are linear or nonlinear. The ``spanning tree construction" leads to a straightforward quantization prescription using a set of canonically conjugate coordinates that could be efficiently implemented in future software packages that perform generic circuit quantization.

Looking forward, this approach will provide an efficient algorithm for computing the numerical spectra of complicated hybrid circuits simultaneously involving Josephson junctions, quantum phase slips, and other arbitrary nonlinear elements that can be classified as inductive or capacitive.  Obtaining these spectra will be a critical step in identifying the behavior of quantum phase slip and other nonlinear capacitive elements in a circuit quantum electrodynamics setup, opening new avenues to design and create novel superconducting devices beyond the current architectures. Further, this formalism should extend naturally to nonlinear mechanical oscillators \cite{Teufel2011,Khorasani2017, Ma2021,PhysRevX.11.031027,https://doi.org/10.48550/arxiv.2211.07632} 
and ideal non-reciprocal elements \cite{PhysRevB.99.014514}, and may also be extensible to the quantization of transmission lines \cite{ParraRodriguez2022canonical}.

\section*{Acknowledgements} 

This work was supported by a Research Fellowship from the Alfred P. Sloan Foundation under Grant FG-2020-13795 (AL), by the U.S. Air Force Office of Scientific Research under Grant FA9550-21-1-0195 (AO, AL) and by the U.S. Army Research Office under Grant No.~W911NF-22-1-0050 (SJ, TL, RWS, AG).

\begin{appendix}

\section{Mathematical formalism} \label{app:math}
In this appendix, we provide a mathematically precise discussion of the results highlighted in the main text.  First, we prove the necessary graph theoretic properties of $\Omega_{ev}$.  Second, we prove facts about circuit quantization, including the existence of the ``spanning tree construction" of canonical coordinates for arbitrary circuits, and identifying the number of Noether conserved charges.
\subsection{Graph theory for circuits}\label{app:gt}
In previous discussions, we have elected to use terminology familiar in the superconducting circuit community. We will continue to do so here. For the mathematically inclined, we refer to graph--theoretical edges as branches and we refer to vertices as nodes. 
  \begin{defn}
    Let $\mathcal V$ and $\mathcal E$ be discrete node and branch sets.  \textbf{Chains} on $\mathcal V$ are elements of
    \begin{equation}
        \mathcal{D}(\mathcal V) = \left\lbrace \left.\sum_{v\in \mathcal V} m_v |v\rangle \right| m_v \in \mathbb{Z} \right\rbrace
    \end{equation}
    where $|v\rangle$ is a vector in a vector space of dimension $|\mathcal V|$.  Chains on $\mathcal E$ are elements of a set $\mathcal{D}(\mathcal E)$ defined analogously. 
  \end{defn}
  \begin{defn}
     Define the \textbf{incidence map} $\mathcal{A}:E \rightarrow \mathcal V \times \mathcal V$, such that if $\mathcal{A}(e) = (u,v)$, branch $e$ is oriented from $u$ to $v$.  We'll really be using the \textbf{incidence matrix} $A : \mathcal{D}(\mathcal V) \rightarrow \mathcal{\mathcal D}(E)$, defined (using bra-ket notation) as 
      \begin{equation}
          \langle e|A|v\rangle = \left\lbrace\begin{array}{ll}
               1 &\ \mathcal{A}(e) = (u,v) \text{ for some } u   \\
               -1 &\ \mathcal{A}(e) = (v,u) \text{ for some } u   \\
               0 &\ \text{otherwise} 
          \end{array}\right..
      \end{equation}
      In the framework of homology, the incidence matrix $A$ can be thought of as the  boundary map $\partial$.  Let $A_{ev}= \langle e|A|v\rangle$.  Less formally, $A_{ev}=1$ if edge $e$ ends on $v$, while $A_{ev}=-1$ if $e$ starts on $v$.   We say that $(\mathcal V,\mathcal E,A)$ is a \textbf{directed graph}. 
  \end{defn}
  \begin{defn}
    Let $G = (\mathcal V,\mathcal E,A)$  be a directed and connected
    graph. 
    Partition $\mathcal E$ so that $\mathcal E = \mathcal C \cup \mathcal I$ and $ \mathcal C \cap \mathcal I = \emptyset$.
    We define $(\mathcal V,\mathcal E,A,\mathcal C)$ as a \textbf{circuit}.  Intuitively, we will put a capacitive element on all branches in $\mathcal C $, and an inductive element on all branches in $\mathcal I$.
  \end{defn}
  We note that the choice to require $G$ to be connected is without loss of generality since a circuit with multiple connected components would lead to a separable problem in later discussion.
  The object which we define to be a circuit consists of a directed graph and a determination of which branches contain capacitors. 
  \begin{defn}
      The \textbf{reduced incidence matrix} $\Omega : \mathcal{D}(\mathcal V) \rightarrow \mathcal{D}(\mathcal C)$ obeys $\langle e|\Omega|v\rangle = \langle e|A|v\rangle$.  The dimensions of $\Omega$ are distinct, and in what follows $\Omega_{ev} = \langle e|\Omega|v\rangle$ will appear frequently.
  \end{defn}
It is possible that multiple branches begin and end at the same place: i.e. we could have $\mathcal{A}(e_1) = \mathcal{A}(e_2)$ for some $e_1\ne e_2$.  We could also have $e_1 \in \mathcal  I$ and $e_2\in \mathcal C$.  

The following definitions will prove useful in what follows:
  \begin{defn}
      An \textbf{(unoriented) cycle} of length $l$, $\delta= (e_1,\ldots, e_l)$, is an ordered list of $l$ branches in $\mathcal C$, with the property that there exist nodes $v_1,\ldots, v_l$ such that $\mathcal{A}(e_1) = \lbrace v_1,v_2\rbrace$, $\mathcal{A}(e_2) = \lbrace v_2,v_3\rbrace $, $\ldots$, $\mathcal{A}(e_l) = \lbrace v_l, v_1\rbrace$. Here the notation means we do not care about the ordering: $\lbrace u,v\rbrace = \lbrace v,u\rbrace$. Indeed, we do not care about the orientation of edges on a cycle for this discussion.  Intuitively, $\delta$ is a loop made out of only capacitors in the circuit.  Let $\Delta_C$ denote the set of cycles in $\mathcal C$.   For the length $l$ cycle defined above, we write \begin{equation}
        |\delta\rangle = |(e_1,\ldots, e_l) \rangle = \sum_{j=1}^l \sigma_j |e_j\rangle \in \mathcal{D}(\mathcal C),
      \end{equation}
      where \begin{equation}
          \sigma_j = \left\lbrace\begin{array}{ll}
             1  &\ \mathcal{A}(e_j) = (v_j,v_{j+1}).  \\
             -1  &\ \mathcal{A}(e_j) = (v_{j+1},v_j).
          \end{array}\right..
      \end{equation}
      We remark that the choice of $\sigma_i$ implies that, for $\delta$ in $\Delta_C$,  $\langle \gamma | A | v\rangle = 0 $ for all $|v\rangle$ in $\mathcal D(\mathcal V)$. Qualitatively, one could say that $\sigma_i$ is chosen so that the chain corresponding to a given cycle is in some manner ``oriented" even if the cycle itself is not. 
  \end{defn}
  \begin{defn}
    An \textbf{inductively shunted island} in circuit $(\mathcal V,\mathcal E,A,\mathcal C)$ is a subset $U\subseteq \mathcal V$ with the following property:  $u_{1}\in U$ and $u_2\in U$ both hold if and only if there exists a path from $u_1$ to $u_2$ along branches in $\mathcal C$.  Note that a node only connected to inductors forms its own island.  Let $\Gamma_I(\mathcal V,\mathcal E,A,\mathcal{C})$ (denoted as $\Gamma_I$ for shorthand hereafter) be the set of all such islands.  For island $i \in \Gamma_I$, write \begin{equation}
          |i\rangle = \sum_{v \in i } |v\rangle \in \mathcal{D}(V).
      \end{equation}
  \end{defn}
  We make an analogous definition for inductors, which will be useful in future discussion. 
\begin{defn}
    Let $(\mathcal E,\mathcal V,\mathcal C,A)$ be a circuit.  Consider the undirected graph $G_L=(\mathcal V,\mathcal I)$ formed out of only the inductive branches.  We say that an \textbf{capacitvely shunted island} $j$ is a connected subgraph of $G_L$, which is not a proper subgraph of any other connected subgraph of $G_L$.  If the set of all such capacitively shunted islands is $\Gamma_C$, then \begin{equation}
        G_L = \bigcup_{j\in \Gamma_C} (j\cap \mathcal V, j\cap \mathcal I). 
    \end{equation}
\end{defn}
We remark that $G_L$ contains all nodes that are in $G$ and in particular even those connected to no inductors. Such nodes constitute their own capacitively shunted island in the same manner that nodes connected only to capacitors make up their own capacative island. 
The set of inductive islands derived from some circuit is unique. 

Combining all of the definitions above, the following theorem summarizes the critical properties of $\Omega$.
\begin{thm}
    The generically non-square matrix $\Omega$ has the following properties: \begin{enumerate}
        \item $\langle \alpha|\Omega = 0$ (i.e. $\langle \alpha |$ is a left null vector of $\Omega$) if and only if \begin{equation}
            |\alpha\rangle = \sum_{\delta \in \Delta_C} \alpha_\delta|\delta\rangle, \;\;\; \alpha_\delta\in \mathbb{Z}.  \label{eq:alpha}
        \end{equation}
        Note that the right hand side in general contains linearly dependent vectors.
        \item $\Omega|\beta\rangle = 0$ (i.e. $|\beta\rangle$ is a right null vector of $\Omega$) if and only if \begin{equation}
            |\beta\rangle = \sum_{i\in \Gamma_I} \beta_i |i\rangle, \;\;\; \beta_i \in \mathbb{Z}.\label{eq:beta}
        \end{equation}
        \item $\Omega$ contains a non-degenerate submatrix $\bar\Omega$, which is a $n\times n$ square matrix where \begin{equation}
            n = |\mathcal V| - |\Gamma_I|.
        \end{equation}
    \end{enumerate}\label{thm110}
\end{thm}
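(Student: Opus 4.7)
The plan is to prove the three parts sequentially, leveraging standard facts from algebraic graph theory applied to the capacitive subgraph $G_C = (\mathcal{V},\mathcal{C})$.

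For part (2), I would start here because it is the simplest. The equation $\Omega|\beta\rangle=0$ says that for every branch $e=u\to v \in \mathcal{C}$ we have $\beta_u = \beta_v$. Thus $\beta$ is constant along every path of capacitive edges, i.e.\ constant on each connected component of $G_C$. By definition of an inductively shunted island, the connected components of $G_C$ are exactly the elements of $\Gamma_I$, so $|\beta\rangle = \sum_{i\in\Gamma_I} \beta_i |i\rangle$ with $\beta_i$ the (integer) common value on island $i$. The reverse inclusion is immediate since $\Omega|i\rangle$ is a sum over branches internal to $i$ of terms that cancel in pairs.

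For part (1), the inclusion ``sum of cycles $\Rightarrow$ left null vector'' follows directly from the sign convention $\sigma_j$ in the definition of $|\delta\rangle$: at each intermediate vertex $v_j$ of the cycle, the two adjacent branches contribute opposite signs to $\langle\delta|\Omega|v_j\rangle$, which cancel, and the same holds at the closing vertex $v_1$. The converse is the key step. I would fix a spanning forest $\mathcal{T}$ of $G_C$ (one spanning tree per connected component). Each non-tree edge $e\in\mathcal{C}\setminus\mathcal{T}$ determines a unique \emph{fundamental cycle} $\delta_e$ consisting of $e$ together with the tree path connecting its endpoints. Given $\langle\alpha|$ in the left null space, I would form
\begin{equation}
    \langle\alpha'| = \langle\alpha| - \sum_{e\in\mathcal{C}\setminus\mathcal{T}} \alpha_e \langle\delta_e|,
\end{equation}
which is still a left null vector and vanishes on every non-tree edge (after adjusting signs to match orientations). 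I would then argue $\langle\alpha'|=0$ by induction on a leaf of $\mathcal{T}$: a leaf edge $f$ of the tree touches a vertex $v$ of degree one in $\mathcal{T}$, so the column $\Omega|v\rangle$ has $\alpha'_f$ as its only nonzero entry among tree edges incident to $v$, forcing $\alpha'_f = 0$; removing $f$ from $\mathcal{T}$ and repeating yields $\langle\alpha'|=0$. Integrality is preserved throughout because the fundamental-cycle coefficients $\alpha_e$ are directly read off from components of the original integer vector $\langle\alpha|$.

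For part (3), I would apply rank-nullity to $\Omega\colon \mathcal{D}(\mathcal{V})\to\mathcal{D}(\mathcal{C})$ (viewed over $\mathbb{Q}$ or $\mathbb{R}$): by part (2) the kernel is spanned by the indicator vectors $\{|i\rangle:i\in\Gamma_I\}$, which are linearly independent because the islands are disjoint. Hence $\dim\ker\Omega = |\Gamma_I|$ and $\mathrm{rank}(\Omega)=|\mathcal{V}|-|\Gamma_I| = n$. A non-degenerate $n\times n$ submatrix $\bar\Omega$ is then obtained by choosing any $n$ linearly independent rows and any $n$ linearly independent columns (for instance, the rows indexed by the spanning forest edges from part (1), and the columns indexed by one vertex from each island removed); the spanning-tree construction in Section~\ref{sec:tree} effectively realizes exactly this choice.

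The main obstacle is the converse direction of part (1): namely, showing that \emph{every} integer left null vector is an integer combination of cycle chains, rather than merely a rational combination. The spanning-forest/fundamental-cycle reduction above handles this cleanly because each fundamental cycle $|\delta_e\rangle$ is the \emph{only} cycle in the basis containing the non-tree edge $e$, so the coefficients $\alpha_e$ extracted are automatically integers. Once this combinatorial lemma is in hand, parts (2) and (3) are short linear-algebraic consequences.
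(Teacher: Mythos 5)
Your proof is correct, and parts (2) and (3) run along the same lines as the paper's (constancy of $\beta$ on connected components of the capacitive subgraph, then rank--nullity). Where you genuinely diverge is the converse direction of part (1), which is also the only delicate step. The paper proves it by a cycle-peeling walk: after orienting edges so all coefficients of $\langle\alpha|$ are non-negative, it follows the support of $\langle\alpha|$ from vertex to vertex (the null condition guaranteeing at each vertex an edge to continue along), closes a cycle $\langle\psi|$, subtracts it, and terminates because the coefficient sum strictly decreases. You instead fix a spanning forest of $(\mathcal V,\mathcal C)$, expand $\langle\alpha|$ over the fundamental cycles attached to the non-tree edges, and kill the residual tree-supported null vector by leaf induction (the column of a tree-leaf vertex isolates the single incident tree edge, forcing its coefficient to vanish). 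Both arguments are sound and both preserve integrality -- yours because each fundamental cycle is the unique basis cycle through its non-tree edge, so the coefficients are read directly off the integer vector; the paper's because it only ever subtracts chains with coefficients $\pm 1$. Your route has the added benefit of exhibiting an explicit basis of the left null space of size $|\mathcal C|-|\mathcal T|$, which makes the dimension count in part (3) and the later spanning-tree construction (Theorem \ref{thm330}) more transparent; the paper's peeling argument is self-contained at that point in the appendix and does not need the spanning-tree machinery introduced early. One small imprecision on your side: in part (3), ``any $n$ linearly independent rows and any $n$ linearly independent columns'' is not quite enough as stated -- one needs the chosen $n\times n$ minor to be nonzero, which follows from $\mathrm{rank}(\Omega)=n$ (and is realized concretely by your choice of forest edges and of all vertices minus one per island), so the conclusion stands.
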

\begin{proof}
    We prove the three parts in turn: \begin{enumerate}
        \item It is straightforward to see that (\ref{eq:alpha}) is a null vector of $\Omega$, using linearity and the fact that for any cycle $\delta$ of length $l$, \begin{equation}
        \langle \delta|\Omega = \left(\langle v_1| - \langle v_2| \right) + \left(\langle v_2| - \langle v_3| \right) + \cdots + \left(\langle v_l| - \langle v_1| \right) = 0.
        \end{equation}
        The converse is implied by the homology of the undirected graph $(\mathcal V,\mathcal C)$, but we prove it explicitly.  It will prove convenient to (without loss of generality) choose the orientations of edges $|e\rangle$ such that for all $e$, $\langle \alpha|e\rangle \ge 0$.\footnote{This can be done by simply defining $|-e\rangle = -|e\rangle$ as an edge oriented in the opposite direction.} Pick an edge $e$ such that $\langle \alpha |e\rangle > 0 $. If $\langle e|\Omega = \langle v| - \langle u|$, then there must exist an edge $e^\prime$, with $\langle \alpha|e^\prime \rangle \ne 0$ and $\langle e^\prime |\Omega|v\rangle = -1$, since $\langle e|\Omega |v\rangle =1$ but $\langle \alpha | \Omega |v\rangle = 0$.  Build the bra $\langle \psi| = \langle e| + \langle e^\prime|$, and observe that \begin{equation}
            \langle \psi|\Omega = \langle v^\prime | - \langle u|.
        \end{equation}
        If $\langle v^\prime| \ne \langle u|$, then we keep going: look for another edge in $\langle \alpha|$ of the form $\langle e^{\prime\prime}|\Omega = \langle v^{\prime\prime}| - \langle v^\prime|$ -- such an edge must exist since $\langle \alpha|$ is a null vector, etc.  Then update $|\psi\rangle \rightarrow |\psi\rangle + |e^{\prime\prime}\rangle$.   Since in each step of this process, \begin{equation}
            \sum_e \langle e|\psi\rangle \rightarrow 1+ \sum_e \langle e|\psi\rangle ,
        \end{equation} eventually, this process must terminate because we will (if it does not) run out of edges in $\langle \alpha|$ to include.   When the process does terminate and we find a left null vector $\langle \psi|$ with $\langle \psi|\Omega=0$.  If $\langle \alpha| - \langle \psi| = 0$, then $\langle \alpha|$ simply corresponds to a chain associated with a cycle $\delta \in \Delta_C$.   Otherwise, $\langle \alpha| - \langle \psi|$ is a non-trivial vector where we can simply repeat the argument.   Since the sum of coefficients of $\langle \alpha| - \langle \psi|$ is smaller than the sum in $\langle \alpha |$, the process will terminate.  By construction, each $\langle \psi|$ that we find in this process formed a cycle $\delta \in \Delta_C$, meaning our null vector can be expressed as a chain of the form (\ref{eq:alpha}).
        
        \item Pick some $e\in \mathcal C$.  We know that $\langle e|\Omega|\beta\rangle=0$, which means that if $\mathcal{A}(e) = (u,v)$, then $\langle u|\beta\rangle \beta_u=\beta_v = \langle v|\beta$.  For any two vertices in island $i\in \Gamma_I$, we can find a path between them (call its length $l$): $(e_1,\ldots, e_l)$.  Applying this argument to all edges along the path, we conclude that if $v_{1,2}\in i$, $\beta_{v_1}=\beta_{v_2}$.  Thus, the most generic null vector is of the form (\ref{eq:beta}).   It is also straightforward to check that (\ref{eq:beta}) is always a right null vector.

        \item This follows from the rank-nullity theorem, the fact that $\mathcal{D}(\mathcal V)$ is a $|\mathcal V|$-dimensional vector space, and the fact that the number of linearly independent null vectors in (\ref{eq:beta}) is $|\Gamma_I|$.
    \end{enumerate}
    Thus we prove all three claims.
\end{proof}
The substance of Theorem \ref{thm110} is that it is possible to define a symplectic form (and thus to quantize) immediately after enumerating the null vectors of $\Omega$, which are counted exhaustively by the theorem above. Later results will provide simplifications to circuits in general, but none of the following results are strictly necessary to generically quantize non--dissipative circuits. 
Point 3 of the theorem above immediately implies the following corollaries:
\begin{cor}
    Consider the circuit $(\mathcal V,\mathcal E,A,\mathcal C)$. Define 
    \begin{equation}
       g = |\mathcal C| - |\mathcal V| + 1.
    \end{equation}
    $g$ is the graph--theoretic equivalent of a topological genus, and it counts the number of loops in a graph. 
    \begin{equation}
        g - 1 = |\Delta_C| - |\Gamma_I| .
    \end{equation}
\end{cor}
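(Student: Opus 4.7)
The plan is to apply the rank--nullity theorem to $\Omega$, viewed as a linear map $\Omega : \mathcal D(\mathcal V) \to \mathcal D(\mathcal C)$, combined directly with the three parts of Theorem~\ref{thm110}. First I would note that Point~3 of Theorem~\ref{thm110} gives $\mathrm{rank}\,\Omega = |\mathcal V| - |\Gamma_I|$. Since $\mathrm{rank}\,\Omega = \mathrm{rank}\,\Omega^{\mathsf T}$ and $\Omega^{\mathsf T}$ acts on the $|\mathcal C|$-dimensional space $\mathcal D(\mathcal C)^*$, rank--nullity gives
\begin{equation}
\dim \ker \Omega^{\mathsf T} \;=\; |\mathcal C| - \mathrm{rank}\,\Omega \;=\; |\mathcal C| - |\mathcal V| + |\Gamma_I|.
\end{equation}

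Next I would invoke Point~1 of Theorem~\ref{thm110}, which identifies the left null space of $\Omega$ with the integer span of the cycle vectors $\{|\delta\rangle : \delta \in \Delta_C\}$. Reading $|\Delta_C|$ as the dimension of this span (that is, the number of \emph{independent} cycles, equivalently the first Betti number $b_1$ of the capacitive subgraph on the vertex set $\mathcal V$), we have $|\Delta_C| = \dim \ker \Omega^{\mathsf T}$. Substituting the rank--nullity identity and recalling $g = |\mathcal C| - |\mathcal V| + 1$ yields
\begin{equation}
|\Delta_C| \;=\; |\mathcal C| - |\mathcal V| + |\Gamma_I| \;=\; (g-1) + |\Gamma_I|,
\end{equation}
which rearranges to the desired identity $g - 1 = |\Delta_C| - |\Gamma_I|$.

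The one genuine subtlety, and the main thing I would flag rather than any computation, is the interpretation of $|\Delta_C|$. Taken literally, $\Delta_C$ as defined in Appendix~\ref{app:gt} is the set of \emph{all} unoriented cycles in the capacitive subgraph, whose cardinality is generally much larger than the cyclomatic number and can even be infinite for multigraphs with parallel edges. The cycle vectors $|\delta\rangle$ are therefore highly linearly dependent. The corollary only makes sense if $|\Delta_C|$ is read as the rank of this system of vectors; equivalently, as the cardinality of any fundamental cycle basis obtained by fixing a spanning tree of the capacitive subgraph and listing one cycle per non-tree edge. With this convention, which is the one tacitly used in the main text when enumerating independent constraints from capacitive loops, the proof reduces to the two-line rank--nullity argument above; without it, no such identity can hold and the statement would need to be reformulated.
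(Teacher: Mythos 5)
Your proposal is correct and follows essentially the same route as the paper's own proof, which equates the row rank $|\mathcal V|-|\Gamma_I|$ and the column rank $|\mathcal C|-|\Delta_C|$ of $\Omega$ via rank--nullity. Your caveat about reading $|\Delta_C|$ as the number of \emph{independent} capacitive cycles (one per non-tree edge of a spanning tree/forest of the capacitive subgraph) is a fair observation, and it is the interpretation the paper tacitly uses both here and in Theorem~\ref{thm330}.
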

\begin{proof}
    The rank nullity theorem guarantees that the row--rank and the column--rank of $\Omega$ are equal, and $|\mathcal V| - |\Gamma_I| $ counts the row--rank of $ \Omega$ while $|\mathcal E| - |\Delta_C| $ counts the column--rank therein. 
\end{proof}

\subsection{Circuit quantization on general graphs}\label{app:sym}
The discussion in Appendix \ref{app:gt} was entirely self--contained. At this point, we shift our focus to the relation of graph theory to the dynamical systems of interest. What follows will depend on discussion that can be found broadly in 
Section \ref{sec:formal}.

\begin{thm}\label{thm330}
    Let $G = (\mathcal E,\mathcal V,A,\mathcal C)$ be a circuit. It is always possible to define $|\mathcal C| - |\Delta_C|$ variables
    $Q_i = A_{ie}q^e$ and $\Phi_i = B_{iv}\phi^v$ so that 
    \begin{equation}
        \sum_{e,v}q_e \Omega_{ev} \dot \phi_v = \sum_{i = 1}^{|\mathcal C| - |\Delta_C|} Q_i\dot\Phi_i =\sum_{i = 1}^{|\mathcal C| - |\Delta_C|} \sum_{e,v} A_{ie} B_{iv} q_e \dot\phi_v.
    \end{equation}
  $A_{ie}$ and $B_{iv}$ can be obtained by picking a spanning tree $\mathcal{T}\subseteq \mathcal{C}$.   The transformations between the variables corresponding to different spanning trees are canonical.
\end{thm}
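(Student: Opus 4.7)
My plan is to build the change of variables directly from the spanning tree $\mathcal{T}\subseteq\mathcal{C}$, and then verify the counting matches $|\mathcal{C}|-|\Delta_C|$ using Theorem \ref{thm110} and its corollary. First I would note that a spanning tree of the capacitive subgraph $(\mathcal{V},\mathcal{C})$ has one fewer edge than vertices in each connected component, and the connected components of $(\mathcal{V},\mathcal{C})$ are precisely the inductively shunted islands of $\Gamma_I$. Hence $|\mathcal{T}|=|\mathcal{V}|-|\Gamma_I|$, which by the corollary equals $|\mathcal{C}|-|\Delta_C|$; this gives the correct count. For each $f\in\mathcal{T}$, define the branch-flux coordinate $\Phi_f=\sum_v \Omega_{fv}\phi_v=\phi_{v_f^+}-\phi_{v_f^-}$. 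The $\{\Phi_f\}$ are linearly independent because, as $f$ ranges over a tree, the corresponding rows of $\Omega$ are linearly independent (otherwise a cycle would exist in $\mathcal{T}$).

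Next, I would rewrite the symplectic term using branch fluxes, $\sum_{e,v}q_e\Omega_{ev}\dot\phi_v=\sum_{e\in\mathcal{C}}q_e\dot\phi_e$, and expand each $\phi_e$ in the basis $\{\Phi_f\}_{f\in\mathcal{T}}$. For $e\in\mathcal{T}$ this is trivial: $\phi_e=\Phi_e$. For $e\in\mathcal{C}\setminus\mathcal{T}$, adjoining $e$ to $\mathcal{T}$ creates a unique fundamental cycle $\delta_e\in\Delta_C$, and by part (1) of Theorem \ref{thm110} the corresponding left null vector yields an integer relation
\begin{equation}
\phi_e=\sum_{f\in\mathcal{T}}c_{ef}\Phi_f,\qquad c_{ef}\in\{0,\pm1\},
\end{equation}
with $c_{ef}=\delta_{ef}$ whenever $e\in\mathcal{T}$. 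Substituting,
\begin{equation}
\sum_{e\in\mathcal{C}}q_e\dot\phi_e=\sum_{f\in\mathcal{T}}\Bigl(\sum_{e\in\mathcal{C}}c_{ef}q_e\Bigr)\dot\Phi_f=\sum_{f\in\mathcal{T}}Q_f\dot\Phi_f,
\end{equation}
where $Q_f\equiv\sum_e c_{ef}q_e$. Reading off $A_{fe}=c_{ef}$ and $B_{fv}=\Omega_{fv}$ completes the construction. The $Q_f$ are linearly independent because $c_{ef}=\delta_{ef}$ on $f\in\mathcal{T}$.

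For the canonical-transformation claim between two spanning trees $\mathcal{T}$ and $\mathcal{T}'$, I would observe that both yield expressions $\sum_{f\in\mathcal{T}}Q_f\dot\Phi_f$ and $\sum_{f'\in\mathcal{T}'}Q'_{f'}\dot\Phi'_{f'}$ equal to the same invariantly defined one-form $\sum_e q_e\dot\phi_e$ on the constrained phase space obtained after quotienting by the null vectors of $\Omega$ (with Noether-conserved charges fixed). Expressing $\Phi'_{f'}$ as integer linear combinations of the $\Phi_f$ via the fundamental cycle expansion in $\mathcal{T}$, and dually expressing $Q'_{f'}$ via the inverse-transpose integer matrix, makes the change of variables explicit; preservation of the symplectic one-form then guarantees it is a canonical transformation (indeed an integer symplectic one, the composition of elementary ``Whitney flips'' between spanning trees).

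The main obstacle I expect is the middle step, namely showing that the integer expansion $\phi_e=\sum_{f\in\mathcal{T}}c_{ef}\Phi_f$ really does reproduce \emph{every} relation encoded in $\Omega$, so that no hidden constraints on the $\phi_v$ are lost and no spurious new ones are introduced. This reduces, via the loop constraint in Eq.~(\ref{eq:loopconstraint}) and part (1) of Theorem \ref{thm110}, to the statement that the $|\Delta_C|$ fundamental cycles of $\mathcal{T}$ form a $\mathbb{Z}$-basis of the cycle space of $(\mathcal{V},\mathcal{C})$; this is a standard fact of graph homology that I would invoke rather than reprove. Likewise, the right null vectors listed in Eq.~(\ref{eq:beta}) are automatically quotiented out because only the $|\mathcal{V}|-|\Gamma_I|$ independent branch fluxes $\Phi_f$ (not the $\phi_v$ themselves) appear, so the spanning-tree construction simultaneously removes all capacitive-loop and inductive-island null vectors in one stroke.
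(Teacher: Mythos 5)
Your construction is correct and is essentially the paper's own proof: choose a spanning tree $\mathcal{T}\subseteq\mathcal{C}$, expand the chord branch fluxes through the unique tree paths (the paper's integer matrix $K_{ef}$, your fundamental-cycle coefficients $c_{ef}$), lump the charges as $Q_f=q_f+\sum_{e\notin\mathcal{T}}c_{ef}q_e$, and count $|\mathcal{T}|=|\mathcal{V}|-|\Gamma_I|=|\mathcal{C}|-|\Delta_C|$. The only real difference is in the last claim, where the paper factors the tree-to-tree change of variables into explicit single-edge swaps with an explicit unimodular matrix $\Sigma=\mathbf{D}(\mathbb{I}+\mathbf{N})$, while you argue more abstractly from invariance of the one-form plus integrality of the inverse-transpose (your ``Whitney flips''); that is the same argument, just with the unimodularity of the basis change left implicit rather than computed.
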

\begin{proof}
    Choose a spanning tree $\mathcal T$ of $\mathcal C$ and write 
    \begin{equation}
        \mathcal C \setminus \mathcal T = \{a_1, a_2 ,\dots , a_{|\Delta_C|}\}.
    \end{equation}
    The spanning tree has the property that every node adjacent to some branch in $\mathcal C$ is adjacent to some branch in $\mathcal T$, and $\mathcal T$ contains no cycles. 
    Succinctly, if $v_1$ and $v_2$ are connected by capacitors, there is a unique shortest path from $v_1$ to $v_2$ traversing only on branches in $\mathcal T$. The choice of $\mathcal T$ is necessarily non--unique and we will later show that the choice of $\mathcal T$ cannot  have physical implications. 
    By the uniqueness in $\mathcal T$ of a path between two vertices, every branch flux from $\mathcal C \setminus \mathcal T$ can be expressed as an linear combination of branch fluxes from $\mathcal T$ with integer coefficients. 
 (Recall the definition of branch flux $\phi_e$ in (\ref{eqn:branchflux}).) Namely, there exists some (generally rectangular) matrix $K\in \mathbb R ^{|\mathcal T | \times |\mathcal C \setminus \mathcal T|}$ satisfying 
    \begin{equation}
       \phi_e = \sum_{f \in \mathcal T} K_{ef}  \phi_f \label{eq:Kmatrix}
    \end{equation}
    for $e\in\mathcal C \setminus \mathcal T$. By construction, $K_{ef}\in \lbrace -1,0,1\rbrace$.

    We recognize that 
    \begin{equation}
        \sum_{e,v} q_e \Omega_{ev} \dot\phi_v = \sum_{f \in \mathcal T} q_f \dot \phi_f + \sum_{e \not\in \mathcal T}\sum_{f \in \mathcal T} q_e K_{ef} \dot\phi_f.
    \end{equation}
    Define 
    \begin{equation}
        M = 
        \begin{pmatrix}
            \mathbb I_{|\mathcal T|} & K
        \end{pmatrix}^\mathrm{T}
    \end{equation}
    and rewrite 
    \begin{equation}
        Q_f = \sum_{e}q_e M_{ef}.
    \end{equation}
    Since $M$ has rank $|\mathcal T|$, the $Q_f$ variables are independent, and they span a vector space of dimension $|\mathcal T| = |\mathcal C| - |\Delta_C|$, so we see that all left null vectors of $\Omega_{ev}$ have been removed by this definiton. 
    A similar argument shows that the $\phi_f$ variables are also independent, and thus all right null vectors have been removed. 

    Now suppose another spanning tree, $\mathcal T'$, is chosen \footnote{There are $M \le \prod_{\delta \in \Delta_C}|\delta|$ possible choices of spanning tree in a circuit.}. Suppose further that $\mathcal T$ and $\mathcal T'$ differ by a single edge, so that 
    \begin{equation}
        \mathcal T = \{f_0\} \cup \mathcal A
    \end{equation} 
    and 
    \begin{equation}
        \mathcal T' = \{f_1\} \cup \mathcal A
    \end{equation}
    with both $f_0$ and $f_1$ absent from $\mathcal A \subset \mathcal C$.
    Further demand that $f_0$ and $f_1$ are in the same cycle. 
    Since $\mathcal T$ and $\mathcal T'$ are both spanning trees, there exists some matrix $\Sigma$ whose entries are elements of $\{1,-1,0\}$ so that 
    \begin{equation}
        \phi_{f} = \sum_{f'} \Sigma_{ff'} \phi_{f'}
    \end{equation}
    for $f'$  in $\mathcal T'$ and $f$ in $\mathcal T$. Moreover, necessarily $|\mathcal T | = |\mathcal T'|$ so $\Sigma$ is square. Moreover, $\Sigma$ is the identity on edges common to $\mathcal T$ and $\mathcal T'$.  
    Since $f_1$ is an element of $\mathcal T' \setminus \mathcal T$, 
    \begin{equation}
        \phi_{f_1} = \sum_{f \in \mathcal T} K_{f_1f}(\mathcal{T})\phi_f,
    \end{equation}
    where $K(T)$ denotes the $K$-matrix constructed in (\ref{eq:Kmatrix}) for $\mathcal{T}$.
    Since $f_0$ is the unique element of $\mathcal T\setminus \mathcal T'$ and $f_0$ is in the same cycle as $f'$, necessarily the element $K_{f_1f_0}(\mathcal{T})\ne 0$, as otherwise there would be a cycle in $\mathcal{T}$.  So
    \begin{equation}
        \phi_{f_1} = \sum_{f \neq f_0 \in \mathcal T} K_{f'f}(\mathcal{T}) \phi_f + K_{f_1 f_0}(\mathcal{T}) \phi_{f_0}
    \end{equation}
    and thus 
    \begin{equation}
        K_{f_1 f_0}(\mathcal{T}) \phi_{f_0} = \phi_{f_1} - \sum_{f \neq f_0 \in \mathcal T} K_{f_1f}(\mathcal{T})  \phi_f.
    \end{equation}
    Evidently, 
    \begin{equation}
        \Sigma_{ff'} = 
        \begin{pmatrix}
        \mathbb I _{|\mathcal T \cap \mathcal T'|} & 0 \\ 
        0 & K_{f_1 f_0}(\mathcal{T})
        \end{pmatrix}
        \begin{pmatrix}
        \mathbb I_{|\mathcal T \cap \mathcal T'|} & 0 \\
        K_{f_1 f}(\mathcal{T}) & 1 \\
        \end{pmatrix}.
    \end{equation}
    Schematically, $\Sigma = \mathbf{D}(\mathbb I + \mathbf{N})$ with $\mathbf{N}^2 = 0$ and $\mathbf{D}^2 = \mathbb I$. 
    Immediately, $\Sigma^{-1} = (\mathbb I - \mathbf{N}) \mathbf{D}$. 

    Now, clearly, 
    \begin{equation}
        \sum_{f \in \mathcal T} Q_f \dot\phi_f = \sum_{f \in \mathcal T}\sum_{f' \in \mathcal T'} Q_f \Sigma_{ff'} \phi_{f'}  = Q_{f'}' \dot\phi_{f'}
    \end{equation}
    with 
    \begin{equation}
        Q_{f'}' = \sum_{f\in\mathcal{T}} Q_f\Sigma_{f'}^f.
    \end{equation}
    Now the transformation 
    \begin{equation}
    \begin{aligned}
        \mathbf Q \rightarrow \mathbf Q'  = \mathbf Q  \Sigma \\
        \boldsymbol{\phi} \rightarrow \boldsymbol{\phi}' =  \Sigma^{-1} \boldsymbol{\phi}
        \end{aligned}
    \end{equation}
    is clearly canonical.

    Lastly, for general $\mathcal{T}$ and $\mathcal{T}^\prime$, we can find a sequence $\mathcal{T} \rightarrow \mathcal{T}_1\rightarrow \cdots \rightarrow \mathcal{T}^\prime$ where at each step, $\mathcal{T}_j$ and $\mathcal{T}_{j+1}$ differ by a single edge. The composition of canonical transformations corresponding to each step is still canonical.  
    \end{proof}
In the proof of Theorem \ref{thm330}, the $Q_f$ variables are independent and always number $|\mathcal C|  - |\Delta_C|$. Moreover, the transformation between different choices of spanning tree is canonical. Together, these two facts imply that the same non-dynamical degrees of freedom are removed by \emph{every} choice of spanning tree. By theorem \ref{thm110} these are necessarily those corresponding to the current about each loop of capacitors.  Similarly, the null vectors associated to inductively shunted islands are always the same.

Theorem \ref{thm330} depends only upon Theorem \ref{thm110} and previous graph--theoretic constructions. 
In practice, it is often the case that one naturally identifies conjugate variables by applying Theorem \ref{thm220} instead, without the need to apply the above result. 
Further, it may pose a technical challenge to write the Hamiltonian associated with some circuit in terms of only $Q_e$ variables as defined above (see Section \ref{sec:singular}), but as a matter of principle the Hamiltonian will always exist.

One utility of Theorem \ref{thm330} arises in the circumstance where one wishes to periodically identify some subset of variables on a finite interval. Upon doing so, it is required that the conjugate of the periodic variable be integer--valued and theorem \ref{thm330} guarantees that this is the case.  
We state the following two observations to emphasize this point: 

\begin{obs}
    \label{cor:sym}
    Consider the phase space 
    \begin{equation}
    M = \mathbb{R}^{|\mathcal V|}\times \mathbb{R}^{|\mathcal C|}.
    \end{equation}
    Given the reduced incidence matrix $\Omega$, define the 2-form on $M$ \begin{equation}
        \omega = \sum_{e\in \mathcal C}\sum_{v\in \mathcal V} \Omega_{ev} \mathrm{d}q_e \wedge \mathrm{d}\phi_v .
    \end{equation}
    Then there exists a submanifold $\bar M$ of $M$, which is symplectic, diffeomorphic to (equivalent to) the cotangent bundle $\mathbb{T}^* \mathbb R^n = \mathbb R^{2 n}$, with $\omega$ a globally constant symplectic form. \label{cor111}
    On $\bar M$, 
    \begin{equation}
        \omega = \sum_{f \in \mathcal T} \mathrm d Q_f \wedge \mathrm d \phi_f 
    \end{equation}
    as guaranteed by Theorem \ref{thm330}.  
\end{obs}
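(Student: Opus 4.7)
The plan is to bootstrap directly off Theorem \ref{thm330}, which already supplies an explicit change of coordinates in which $\omega$ assumes canonical form. Closedness of $\omega$ is immediate because its coefficients $\Omega_{ev}$ are constants, so $d\omega = 0$ trivially; the only substantive issue is non-degeneracy, which fails on all of $M$ precisely due to the left and right null vectors catalogued in Theorem \ref{thm110}.

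First, I would fix a spanning tree $\mathcal{T}\subseteq \mathcal{C}$ and invoke Theorem \ref{thm330} to obtain linear combinations $Q_f = \sum_e A_{fe} q_e$ and $\Phi_f = \sum_v B_{fv} \phi_v$ for $f\in\mathcal{T}$. The identity $\sum_{e,v} q_e \Omega_{ev} \dot\phi_v = \sum_{f} Q_f \dot\Phi_f$ established there, reinterpreted at the level of differential forms (i.e.\ replacing the Liouville-type 1-form by its exterior derivative), gives a first expression $\omega = \sum_{f \in \mathcal{T}} dQ_f \wedge d\Phi_f$ up to possible pieces involving differentials dual to the null vectors of $\Omega$.

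Next, I would complete $(Q_f,\Phi_f)$ to a linear coordinate system on $M$ by adjoining coordinates $\lambda_\delta$ (one for each cycle $\delta \in \Delta_C$) and $\mu_i$ (one for each inductively shunted island $i\in \Gamma_I$), dual to the left and right null vectors classified in Theorem \ref{thm110}. Because any vector $X$ in the kernel of $\Omega$ satisfies $\iota_X \omega = 0$ on $M$, elementary linear algebra forces all $d\lambda_\delta$ and $d\mu_i$ pieces of $\omega$ to vanish identically, so globally in the adapted basis
\begin{equation}
    \omega = \sum_{f \in \mathcal{T}} dQ_f \wedge d\Phi_f.
\end{equation}
Define the submanifold $\bar M \subset M$ by the affine slice $\lambda_\delta = 0 = \mu_i$. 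Then $\bar M$ is a linear subspace of dimension $2n$ with $n = |\mathcal{C}| - |\Delta_C|$, diffeomorphic to $\mathbb{R}^{2n} \cong T^*\mathbb{R}^n$, and the restricted form $\omega|_{\bar M}$ is constant-coefficient and manifestly non-degenerate, hence symplectic.

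The main obstacle is the dimension bookkeeping that guarantees the adjoined null-direction coordinates genuinely extend $(Q_f,\Phi_f)_{f\in\mathcal{T}}$ to a full coordinate chart on $M$: one must check that $|\mathcal V| + |\mathcal C| - |\Delta_C| - |\Gamma_I| = 2n = 2(|\mathcal C| - |\Delta_C|)$, which reduces to the rank equality $|\mathcal V| - |\Gamma_I| = |\mathcal C| - |\Delta_C|$ established in the corollary following Theorem \ref{thm110}. A secondary but important subtlety is well-definedness: the embedding $\bar M \hookrightarrow M$ depends on the choice of spanning tree, but Theorem \ref{thm330} already shows that distinct choices are related by canonical transformations of the $(Q_f,\Phi_f)$ variables, so the resulting symplectic manifold is intrinsic up to symplectomorphism and the construction is physically unambiguous.
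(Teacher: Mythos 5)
Your proposal is correct and takes essentially the same route as the paper: the paper offers no separate proof of this Observation, presenting it as an immediate consequence of the spanning-tree construction of Theorem \ref{thm330} (with the null-vector classification of Theorem \ref{thm110}), which is precisely what you bootstrap from. Your explicit completion to adapted coordinates, the vanishing of cross terms along kernel directions, and the dimension check $|\mathcal V|-|\Gamma_I| = |\mathcal C|-|\Delta_C|$ simply spell out details the paper leaves implicit.
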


\begin{obs}\label{obs:quotient}
    Let $G = (\mathcal E,\mathcal V,A,\mathcal C)$ be a circuit, and let $\omega$ be the symplectic form $\omega$ on $T^*\mathbb R^n$ provided by Cor.~\ref{cor:sym} through Thm.~\ref{thm110}. If some number of variables are identified as periodic, the symplectic form on the resulting quotient manifold, $\omega'$ is given by the quotient map.
\end{obs}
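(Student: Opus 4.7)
The plan is to exhibit the periodic identifications as a free and properly discontinuous action of a discrete group of symplectomorphisms on the symplectic manifold $(\bar M, \omega) \cong (T^*\mathbb{R}^n, \omega)$ from Observation \ref{cor:sym}, and then invoke the standard result that a symplectic form invariant under such an action descends uniquely to the quotient manifold along the quotient map.

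First, I would set up the group action precisely. Suppose that a subset $S \subseteq \mathcal{T}$ of the canonical flux variables $\Phi_f$ is identified periodically with period $\phi_0$, so that the identifications read $\Phi_f \sim \Phi_f + \phi_0$ for each $f \in S$. This defines a free action of $\mathbb{Z}^{|S|}$ on $\bar M$ by translations: for $\mathbf{n} \in \mathbb{Z}^{|S|}$, the action sends $(Q_f, \Phi_f) \mapsto (Q_f, \Phi_f + n_f \phi_0)$ for $f\in S$ and leaves the other coordinates fixed. This action is free, properly discontinuous, and acts by diffeomorphisms, so the quotient $\bar M / \mathbb{Z}^{|S|}$ is a smooth manifold diffeomorphic to $\mathbb{R}^n \times (S^1)^{|S|} \times \mathbb{R}^{n-|S|}$. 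Let $\pi : \bar M \to \bar M / \mathbb{Z}^{|S|}$ denote the quotient map.

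Next, I would observe that $\omega = \sum_{f\in \mathcal T} \mathrm d Q_f \wedge \mathrm d\Phi_f$ from Observation \ref{cor:sym} is manifestly invariant under the $\mathbb{Z}^{|S|}$-action, since each generator shifts $\Phi_f$ by a constant and so pulls back $\mathrm d\Phi_f$ to itself while leaving $\mathrm d Q_f$ unchanged. By the standard quotient construction for forms invariant under a free properly discontinuous group action, there exists a unique differential 2-form $\omega'$ on $\bar M / \mathbb{Z}^{|S|}$ with $\pi^* \omega' = \omega$. This is the sense in which $\omega'$ is given by the quotient map. Closedness descends automatically, $\mathrm d \omega' = 0$ since $\pi^* \mathrm d \omega' = \mathrm d \pi^* \omega' = \mathrm d \omega = 0$ and $\pi^*$ is injective on forms for a submersion.

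Finally, I would verify non-degeneracy of $\omega'$. Because $\pi$ is a local diffeomorphism, the pullback $\pi^*$ is a pointwise isomorphism on tangent spaces, and the matrix of $\omega'$ at a point $[p]$ in the quotient equals the matrix of $\omega$ at any lift $p$. Since $\omega$ is non-degenerate on $\bar M$ by Observation \ref{cor:sym}, $\omega'$ is non-degenerate on $\bar M / \mathbb{Z}^{|S|}$, which together with closedness shows that $\omega'$ is a symplectic form. The main technical point is really just checking that the quotient is a manifold (ensured by freeness and proper discontinuity of the translation action) and that $\omega$ is invariant (which is immediate from the constant-coefficient expression provided by the spanning-tree construction of Theorem \ref{thm330}); the existence and uniqueness of $\omega'$ then follow from elementary differential geometry rather than anything specific to circuits.
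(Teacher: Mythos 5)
Your proposal is correct and follows essentially the same route as the paper, which simply invokes the quotient manifold theorem for the free, properly discontinuous $\mathbb{Z}^k$ translation action and lets the invariant constant-coefficient form $\omega$ descend along the quotient map. Your write-up just fills in the standard details (invariance, descent, closedness, non-degeneracy) that the paper leaves to the cited theorem.
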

    This result follows immediately from the quotient manifold theorem \cite{manifold}. We remark that the symplectic form produced by the quotient map is not exact, in the physically relevant case where we quotient by a free group action of $\mathbb{Z}^k$, so that $\mathbb{R}^k/\mathbb{Z}^k = \mathrm{T}^k$ becomes a torus.

Now, we count the number of degrees that can be removed by Noether's Theorem for a general circuit.

\begin{thm}\label{thm220}
  Let $(\mathcal E,\mathcal V,A,\mathcal C)$ be a circuit. Consider the symplectic form and Hamiltonian provided by (\ref{eq:main}).  If $\Gamma_C$ is the set of capacitively shunted islands, there exist $|\Gamma_C|-1$ Noether charges. 
\end{thm}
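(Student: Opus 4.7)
The plan is to exhibit, for each capacitively shunted island, a continuous symmetry of the Lagrangian in Eq.~(\ref{eq:main}), extract the associated Noether charges, and then show that the resulting $|\Gamma_C|$ charges satisfy exactly one linear relation.

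First, for each $j\in\Gamma_C$, I would define the indicator vector $\delta^{(j)}_v = 1$ if $v\in j$ and $0$ otherwise, and consider the one-parameter family $\phi_v\mapsto \phi_v + \epsilon\, \delta^{(j)}_v$ with $q_e$ unchanged. The capacitive energy $E_{\mathcal{C},\mathrm{tot}}(q_e)$ is obviously invariant, and the symplectic term $\sum_{e,v} q_e\Omega_{ev}\dot\phi_v$ is invariant because adding a time-independent constant to $\phi_v$ does not change $\dot\phi_v$. The essential check is on $E_{\mathcal{I},\mathrm{tot}} = \sum_{e=u\to v\in\mathcal{I}} E_{\mathcal{I},e}(\phi_v-\phi_u)$: because every inductive branch has both endpoints in a single capacitively shunted island (this is the very definition of $\Gamma_C$), we have $\delta^{(j)}_v-\delta^{(j)}_u=0$ for all $e\in\mathcal{I}$, so the inductive energy is invariant. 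Hence each shift is a genuine symmetry of $L$.

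Next, Noether's theorem applied to $L$ in Eq.~(\ref{eq:main}) gives, using $\partial L/\partial\dot\phi_v = \sum_e q_e\Omega_{ev}$, the conserved charge
\begin{equation*}
    Q_j \;=\; \sum_{v\in\mathcal{V}} \frac{\partial L}{\partial\dot\phi_v}\,\delta^{(j)}_v \;=\; \sum_{e\in\mathcal{C}} q_e \sum_{v\in j}\Omega_{ev},
\end{equation*}
producing $|\Gamma_C|$ candidate conserved quantities. Summing, $\sum_{j\in\Gamma_C} Q_j = \sum_e q_e \sum_{v\in\mathcal{V}}\Omega_{ev} = 0$, since each row of the incidence matrix sums to zero. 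Thus the span of the $Q_j$ has dimension at most $|\Gamma_C|-1$. Physically, this redundancy corresponds to the global shift $\phi_v\mapsto\phi_v+c$, whose Noether charge is identically zero.

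To see that the dimension is exactly $|\Gamma_C|-1$, I would interpret the coefficient matrix $\tilde\Omega_{je} = \sum_{v\in j}\Omega_{ev}$ as the incidence matrix of the graph $\tilde G$ obtained from $G=(\mathcal{V},\mathcal{E},A)$ by contracting each capacitively shunted island to a single vertex (any capacitive self-loops created by contraction have coefficient $0$ and are irrelevant). Connectedness of $G$, assumed in the definition of a circuit, is preserved under vertex contraction, so $\tilde G$ is a connected directed graph on $|\Gamma_C|$ vertices; its incidence matrix therefore has the standard corank one, i.e.\ rank $|\Gamma_C|-1$. Consequently the $Q_j$ span a $(|\Gamma_C|-1)$-dimensional subspace of linear functions of the $q_e$, which is the claim. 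The main obstacle is not technical but conceptual: the count depends entirely on recognizing that the symmetry algebra of shifts is identified with $\ker(\tilde\Omega^{\mathsf T})$ and that this kernel is one-dimensional precisely because of the connectedness hypothesis. If $G$ were disconnected (which the formalism excludes without loss of generality), the number of independent Noether charges would increase by $|\Gamma_C|-1$ per additional component, recovering the familiar one-redundancy-per-connected-component pattern.
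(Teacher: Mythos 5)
Your proposal is correct and follows essentially the same route as the paper: per-island shift symmetries of $\phi_v$ (using that every inductive branch lies within a single capacitively shunted island), the Noether charges $Q_j=\sum_{e\in\mathcal{C}}q_e\sum_{v\in j}\Omega_{ev}$, the single relation $\sum_j Q_j=0$, and connectedness to show no further relations exist. Your final step, phrased as corank one of the island-contracted incidence matrix, is just a slightly more formal rendering of the paper's argument that any proper subset of islands is attached to the rest of the circuit by a capacitive edge, and it has the minor advantage of ruling out arbitrary linear relations rather than only subset sums.
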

\begin{proof}
  Enumerate the (unique) elements of $\Gamma_C = \{J_1,J_2,\dots,J_{|\Gamma_C|}\}$. 
  Without loss of generality, take $J_1 = \lbrace u_1,\ldots, u_l\rbrace$. If any inductive edge couples to a $u_i\in J_1$, the edge also connects to another vertex in $J_1$ (by definition of a capacitively shunted island).  Therefore, since $H$ only depends on $\phi_v$s via inductive edges, \begin{equation}
      H(q_e,\phi_v) = H(q_e, \phi_v + c_v)
  \end{equation}
  where \begin{equation}
      c_v = \left\lbrace\begin{array}{ll} 1 &\ v\in J_1 \\ 0 &\ v\notin J_1\end{array}\right..
  \end{equation}
  From Lagrangian (\ref{eq:main}), evaluate the Euler-Lagrange equation \begin{equation}
      0 = \sum_{v\in\mathcal{V}} c_v \frac{\delta S}{\delta \phi_v} = - \sum_{v\in \mathcal{V}} c_v \left[ \sum_{e\in\mathcal{C}}\dot q_e \Omega_{ev} - \frac{\partial H}{\partial \phi_v} \right] = - \frac{\mathrm{d}}{\mathrm{d}t}\sum_{v\in J_1}\sum_{e\in\mathcal{C}}q_e\Omega_{ev}.
  \end{equation}
We conclude that \begin{equation}
    Q_{J_i} = \sum_{v\in J_i}\sum_{e\in \mathcal C} q_e\Omega_{ev}
  \end{equation}
  is a constant of motion.
  
  Since we can do this for all $|\Gamma_C|$ islands, we may naively conclude that there are $|\Gamma_C|$ independent Noether charges.  However, notice that \begin{equation}
  \sum_{J_i \in \Gamma_C} Q_{J_i} = \sum_{v\in\mathcal{V}}\sum_{e\in\mathcal{C}} q_e \Omega_{ev} = 0,   \label{eq:thm13vanishing}
  \end{equation}
  since every edge enters one vertex and exits one vertex.  Therefore, there is a constraint that not all $Q_{J_i}$s can be independent.  Since we assume that the circuit is connected, there will be no other constraints on any $Q_{J_i}$, since any subset of the capacitively shunted islands (that is not the entire circuit) must be connected to another part of the circuit by at least one capacitive edge, meaning that the generalization of (\ref{eq:thm13vanishing}) does not vanish for any other subset of $\Gamma_C$.
  \end{proof}

\section{Quantization without Noether charges}\label{app:noether}
We have remarked that Theorem \ref{thm110} is necessary to carry out the quantization procedure but that Theorem \ref{thm220} is not. We will provide an example that makes this distinction clear. 
The Lagrangian corresponding to the circuit drawn in Figure \ref{fig:examples_1}(b) is given by 
\begin{equation}
    L =  q_{e_1}(\dot \phi_{v_2} - \dot \phi _{v_1}) + q_{e_2} (\dot \phi_{v_3} - \dot \phi_{v_2} ) - \frac{1}{2 C_1} q_{e_1}^2 - \frac{1}{2 C_2} q_{e_2}^2 - \frac{1}{2 L} (\phi_{v_3} -\phi_{v_1} )^2. 
\end{equation}
To be explicit, the matrix $\Omega$ is given by 
\begin{equation}
    \Omega = \begin{pmatrix} -1 &  1 & 0 \\
    0 & 1 & -1 
    \end{pmatrix}
\end{equation}
and we see readily that $\Omega$ has the right null vector corresponding to a uniform sum over nodes, which informs us that the time evolution of $\phi_{v_1} + \phi_{v_2} + \phi_{v_3}$ is not fixed by $L$.  In the spanning tree construction, we choose dynamical variables $\Phi_{e_1}$ and $\Phi_{e_2}$ to be branch fluxes on the unique spanning tree; the standard branch charges $Q_{e_1}=q_{e_1}$ and $Q_{e_2}=q_{e_2}$ are the canonical conjugate variables. 
The Lagrangian becomes
\begin{equation}
    L = Q_{e_1}\dot\Phi_{e_1} + Q_{e_2}\dot\Phi_{e_2}
    - \frac{1}{2 C_1} Q_{e_1}^2 -  \frac{1}{2 C_2} Q_{e_2}^2 - \frac{1}{2 L } (\Phi_{e_1} + \Phi_{e_2})^2 .
\end{equation}
The quantum Hamiltonian is
\begin{equation}\label{eq:redundantham}
    H = - \frac{\hbar^2}{2 C_1}  \frac{\partial^2 }{\partial \Phi_{e_1}^2 } - \frac{\hbar^2}{2 C_2 } \frac{\partial^2 }{\partial \Phi_{e_2}^2 } + \frac{1}{2 L} (\Phi_{e_1} + \Phi_{e_2})^2 .
\end{equation}
We  recognize (\ref{eq:redundantham}) as a harmonic oscillator coupled to a free particle, after a suitable coordinate charge. The free particle degree of freedom could have been removed from the start by using Theorem \ref{thm220}: the Noether charge is associated with charge conservation on the subcircuit trapped between the two capacitors. 

\end{appendix}

\bibliography{thebib}

\end{document}